    \definecolor{myblue}{RGB}{3,70,143}
    \definecolor{mypink}{RGB}{255,151,151}
    \definecolor{mypurple}{RGB}{164,0,102}
    \definecolor{myteal}{RGB}{0,157,158}
\newcommand{\meas}{\mathcal{M}}
\newcommand{\eps}{\varepsilon}
\newcommand{\pairset}{\mathcal{S}}
\newcommand{\eye}{\mathds{1}}
\DeclareMathOperator*{\argmax}{argmax}
\newtheorem{theorem}{Theorem}
\newtheorem{lemma}[theorem]{Lemma}
\newcommand{\supp}{\cite{*[{See Supplemental Material at }] [{ for the appendices, which includes Refs.~[42--49].}] supp}}
\begin{document}

\title{Optimal Overlapping Tomography}

\author{Kiara Hansenne}
\email{kiara.hansenne@ipht.fr}
\affiliation{Naturwissenschaftlich-Technische Fakult\"at, Universit\"at Siegen, Walter-Flex-Stra{\ss}e 3, 57068 Siegen, Germany}
\affiliation{Institut de Physique Théorique, Université Paris-Saclay, CEA, CNRS, 91191 Gif-sur-Yvette, France}

\author{Rui Qu}
\email{rui.qu@ntu.edu.sg}
\affiliation{Division of Physics and Applied Physics, School of Physical and Mathematical Sciences, Nanyang Technological University, Singapore 637371, Singapore}
\affiliation{Centre for Quantum Technologies, National University of Singapore, Singapore 117543, Singapore}

\author{Lisa T. Weinbrenner}
\affiliation{Naturwissenschaftlich-Technische Fakult\"at, Universit\"at Siegen, Walter-Flex-Stra{\ss}e 3, 57068 Siegen, Germany}

\author{Carlos de Gois}
\affiliation{Naturwissenschaftlich-Technische Fakult\"at, Universit\"at Siegen, Walter-Flex-Stra{\ss}e 3, 57068 Siegen, Germany}

\author{Haifei Wang}
\affiliation{Division of Physics and Applied Physics, School of Physical and Mathematical Sciences, Nanyang Technological University, Singapore 637371, Singapore}
\affiliation{Centre for Quantum Technologies, National University of Singapore, Singapore 117543, Singapore}

\author{Yang Ming}
\affiliation{Division of Physics and Applied Physics, School of Physical and Mathematical Sciences, Nanyang Technological University, Singapore 637371, Singapore}

\author{Zhengning Yang}
\affiliation{Division of Physics and Applied Physics, School of Physical and Mathematical Sciences, Nanyang Technological University, Singapore 637371, Singapore}

\author{Pawe{\l} Horodecki}
\affiliation{Faculty of Applied Physics and Mathematics, Gda\'nsk University of Technology, 80-233 Gda\'nsk, Poland}
\affiliation{International Centre for Theory of Quantum Technologies, University of Gda\'nsk, Jana Ba\.zy\'nskiego 1A, 80-309 Gda\'nsk, Poland}

\author{Weibo Gao}
\email{wbgao@ntu.edu.sg}
\affiliation{Division of Physics and Applied Physics, School of Physical and Mathematical Sciences, Nanyang Technological University, Singapore 637371, Singapore}
\affiliation{Centre for Quantum Technologies, National University of Singapore, Singapore 117543, Singapore}

\author{Otfried G\"uhne}
\email{otfried.guehne@uni-siegen.de}
\affiliation{Naturwissenschaftlich-Technische Fakult\"at, Universit\"at Siegen, Walter-Flex-Stra{\ss}e 3, 57068 Siegen, Germany}

\begin{abstract}
    Characterising large-scale quantum systems is central to fundamental physics and essential for applications of quantum technologies. 
    While a full characterisation requires exponentially increasing resources, focusing on application-relevant information can often lead to significantly simplified analysis. 
    Overlapping tomography is such a scheme, allowing {one} to obtain all the information contained in specific subsystems of multiparticle quantum systems in an efficient manner, but the ultimate limits of this approach remain elusive. 
    We present protocols for overlapping tomography that are optimal with respect to the number of measurement settings. 
    First, by providing algorithmic approaches based on graph theory we find the minimal number of Pauli settings, relating overlapping tomography to the problem of covering arrays 
    in combinatorics.
    This significantly reduces the 
    number of measurement settings,
    showing for instance that two-body 
    overlapping tomography of 
    nearest neighbours in qubit systems with planar topologies
    can always be performed with nine Pauli settings. 
    Second, we prove that 
    using general projective measurements, 
    all $k$-body marginals can be reconstructed with only $3^k$ settings, independently of the system size. 
    Finally, we demonstrate the practical applicability of our methods in a six-photon experiment. 
    Our results will find applications in 
    learning noise and interaction patterns in quantum computers as 
    well as characterising fermionic systems in quantum chemistry. 
\end{abstract}

\maketitle

    Extracting relevant information from large-scale quantum systems is key to advancing quantum technologies. 
    Depending on the scenario, this may involve certifying quantum computers \cite{broadbent2018howtoverify,Gheorghiu2018verification,carrasco2023gaining,proctor2024benchmarking}, identifying quantum phase diagrams \cite{Rem2019identifying,kottmann2020unsupervised}, or certifying the topology and robustness of quantum networks \cite{weinbrenner2024certifying,mao2024certifying}. 
    Currently, efforts are devoted to Hamiltonian learning \cite{knauer2017experimental, evans2019scalable, gebhart2023learning, huang2023learning, olsacher2024hamiltonian, stilck2024efficient}, quantum noise characterisation \cite{samach2022lindblad, vandenberg2023probabilistic, jaloveckas2023efficient, berg2023techniques, wagner2023learning, raza2024online}, or quantum state analysis \cite{huang2020predicting,nguyen2022optimizing}, all of which share a feature of locality.
    Physical Hamiltonians and quantum noise models are often local, involving only few-particle interactions, while quantum states are often well characterised by their few-body marginals. 
    This holds for instance for ground states of gapped few-body Hamiltonians \cite{haselgrove2003quantum, huber2016characterizing, karuvade2019uniquely}, states from shallow circuits \cite{yu2023learning}, injective projected entangled pair states \cite{perez2008peps, molnar2018generalization, cirac2021matrix}, and in some cases, generic pure states \cite{linden2002almost, jones2005parts}. 
    Developing methods to extract such local information is vital, particularly given recent experiments where switching measurement settings requires additional effort \cite{friis2018observation}. 

    The concept of overlapping tomography addresses this problem.
    In its original formulation \cite{cotler2020quantum, bonetmonroig2020nearly, garciaperez2020pairwise}, the authors aimed to reconstruct every $k$-body marginal of an $n$-qubit state $\varrho$ using only Pauli settings, $\set{X, Y, Z}^{\otimes n}$, and showed that this requires a measurement budget increasing only logarithmically in $n$. 
    This was later extended to nearest-neighbour marginals \cite{araujo2022local}, shown to achieve quadratic improvement when considering repetitions of Pauli settings \cite{veltheim2024multiset}, applied to quantum noise characterisation \cite{maciejewski2021modeling, tuziemski2023efficient} and demonstrated experimentally \cite{yang2023experimental}.

    We generalise overlapping tomography in several aspects and present optimal schemes for it.
    First, under the Pauli measurement restriction, we propose a graph-theoretic technique to construct minimal measurement schemes that recover all $k$-body marginals.
    This significantly improves upon current solutions and, importantly, extends to cases where only specific marginals are requested. 
    For example, when marginals of nearest neighbours in planar lattices are needed, we show that nine Pauli settings are sufficient, irrespective of the system size. 
    Second, relaxing the Pauli measurement restriction, we consider local qubit rotations and computational basis measurements, which is routine for all major experimental platforms.
    {With projective measurements, a minimum of $3^k$ settings is required for the state tomography of $k$-qubit systems.
    Here, we show that $k$-body overlapping tomography of $n$-qubit systems can be performed with that minimum of $3^k$ settings, independent of system size $n$.}
    To demonstrate feasibility and scalability, we fully characterise all two-body marginals of a six-photon entangled Dicke state using $12$ optimal Pauli settings and nine optimal non-Pauli settings.
    This nearly halves the requirements of a recent demonstration \cite{yang2023experimental} while generalising to non-Pauli measurements.

{\it Optimality in the Pauli scheme---}For Pauli measurements, overlapping tomography reduces to identifying a set of $n$-body Pauli settings that allows for 
    the reconstruction of all $k$-qubit marginal states. 
    While a naive count suggests $3^k \binom{n}{k}$ measurement settings, this approach is highly redundant, as many settings are effectively identical across $k$-qubit subsets.
    Overlapping tomography exploits these overlaps to construct much smaller measurement sets.
    To give a simple example, the nine Pauli settings
    $XXXX$,
    $ZYYX$,
    $YZZX$,
    $YYXY$,
    $XZYY$,
    $ZXZY$,
    $ZZXZ$,
    $YXYZ$, and
    $XYZZ$ (omitting tensor product symbols)
    suffice to reconstruct all two-qubit marginals, as for every pair of qubits, the nine combinations of Pauli operators occur. 
    The cardinality of such minimal Pauli sets is denoted by $\phi_k(n)$. 
    In the example, the set is optimal, so $\phi_2(4)=9$.

    Our main observation is that this problem can be mapped to a graph-theoretic formulation solvable via binary programming.
    For simplicity, we focus on reconstructing all two-body marginals of an $n$-qubit state, with generalisations to subsets of $k$-body marginals detailed in Appendix \ref{app-sec:arbitrary_connect} of Supplemental Material (SM) \supp \nocite{appel1977solution, bretto2013hypergraph, pk2019distance, colbourncovering, torresuniform, markowits1952portfolio, hedayat1999orthogonal, sloanetables}. 

    First, we construct a graph where the edges represent the two-qubit Pauli operators required for reconstructing the marginals. 
    Each qubit is represented by three vertices (one for each Pauli operator), and vertices corresponding to different qubits are fully connected, forming the complete $n$-partite graph with three nodes per party, $K_{n,3}$ (see Fig.~\ref{fig:3qubit} for a three-qubit example).
    \begin{figure}
        \centering
        \includegraphics[width=0.95\columnwidth]{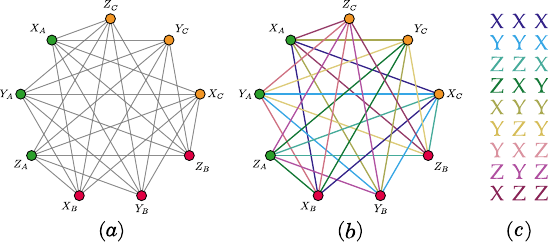}
        \caption{
        (a) Measurement graph $K_{3,3}$ for three qubits. 
        Each qubit is represented by three vertices corresponding to the Pauli operators $X$, $Y$, and $Z$. 
        Edges represent two-body Pauli operators, and cliques (triangles) correspond to three-qubit Pauli settings.
        (b) A minimal edge clique covering of $K_{3,3}$. 
        Each colour groups edges into a triangle, minimising the total number of triangles needed to cover all edges.
        (c) Three-qubit Pauli settings derived from the clique covering, ensuring all two-body Pauli operators for each qubit pair are included. Colours match the triangles in (b).
        }
        \label{fig:3qubit}    
    \end{figure}
    Here, $n$-body Pauli settings correspond to fully connected vertex sets (cliques) in $K_{n,3}$. 
    Thus, finding minimal Pauli settings is equivalent to finding the smallest set of cliques covering all graph edges.

    This can be expressed as a binary optimisation program. 
    First, let $C$ and $E$ denote the sets of cliques and edges of the graph $K_{n,3}$, respectively. 
    We then associate each clique $c \in C$ to a variable $z_c \in \set{0, 1}$. Whenever a 
    clique is active ($z_c = 1$), then the measurement associated to $c$ is part 
    of the solution. 
    Each clique contains several edges, and if clique $c$ is active, 
    all its edges are covered. 
    For a solution to be valid, all the edges of $K_{n,3}$ must be covered. 
    To guarantee this, we associate each edge $e \in E$ to a vector $(e_1, \ldots, 
    e_{\abs{C}})$, where $e_c = 1$ if the edge $e$ is in the clique $c$, otherwise 
    $e_c = 0$. 
    Then, the condition can be expressed by requiring that $\sum_{c \in C} z_c e_c 
    \geq 1$ for any $e \in E$, meaning that for each edge of the graph, at least one 
    clique that contains it must be in the solution. 
    Altogether, we obtain
    \begin{subequations}\label{eq:mip}
        \begin{align}
            \phi_2(n) = &
            \min_{\set{ z_c }_{c \in C}} \;\;\; \sum_{c \in C} z_c \\
            &\text{subject to:} \;\;\; z_c \in \set{0, 1}, \,\forall c \in C, \\
            & \phantom{\text{subject to:}} \;\;\; \sum_{c \in C} z_c e_c \geq 1, \,\forall e \in E .
        \end{align}
    \end{subequations}
    Equation \eqref{eq:mip} can be modelled using modelling languages such as JuMP \cite{Lubin2023} and solved by standard integer programming solvers (see, e.g., Gurobi \cite{gurobi}).
    The minimal Pauli settings are obtained from the resulting values of $z_c$. 
    The solutions improve significantly over previous constructions. For example, the 
    four-qubit solution from above improves the best previous Pauli set, which contains $12$ 
    Pauli settings to perform two-body overlapping 
    tomography \cite{garciaperez2020pairwise}. 
    For six qubits, we obtain $\phi_2(6)=12$, whereas in
    Refs.~\cite{cotler2020quantum, bonetmonroig2020nearly} and in
    Ref.~\cite{garciaperez2020pairwise}, $21$ and $15$ settings are needed respectively 
    (see Appendix \ref{app-sec:min_pauli_sets} of SM \supp \ for explicit minimal Pauli sets).
    
    We add that this graphical formulation is a specific instance of the edge clique 
    covering problem, which has long been studied in graph theory. For instance, de 
    Caen \textit{et al.\@} showed in 1985 that $\phi_2(n) \leq 6 \lceil \log_3(n) \rceil + 3$ \cite{decaen1985clique}, which exactly matches the size of the construction presented in 
    Ref.~\cite{garciaperez2020pairwise}. 
    The general formulation of the edge clique covering problem, however, is known to be NP-complete \cite{orlin1977contentment}.
    {The similar problem of vertex clique covering has also been explored in the context of Pauli measurement scheduling; see, for instance, Refs.~\cite{verteletskyi2020measurement, veltheim2024multiset}.} 

    The combinatorial problem underlying overlapping tomography with Pauli measurements corresponds to covering arrays in combinatorial design theory \cite{colbourn2004combinatorial, torres2013survey}.
    Several facts about the minimal Pauli sets can thus be immediately borrowed, such as bounds on $\phi_k(n)$ and explicit constructions. 
    As examples, it is known that $\phi_k(k+1) = 3^k$, and that $\phi_k(n) \leq (k-1)\Big[\log(\frac{3^k}{3^k-1})\Big]^{-1} \log (n) [1+o(1)]$ \cite{colbourn2004combinatorial, torres2013survey}. 
    However, minimal covering arrays are notoriously hard to construct: when $k=2$, covering arrays with three symbols (here, $X$, $Y$, and $Z$, the three Pauli operators) are only known up to $n=20$ parties \cite{kokkala2020structure}. 
    The resulting $\phi_2(n)$ are presented in Fig.~\ref{fig:nbr_of_settings}, and compared to previous constructions \cite{cotler2020quantum, bonetmonroig2020nearly, garciaperez2020pairwise}.
    \begin{figure}
        \centering
        \includegraphics[width=0.9\columnwidth]{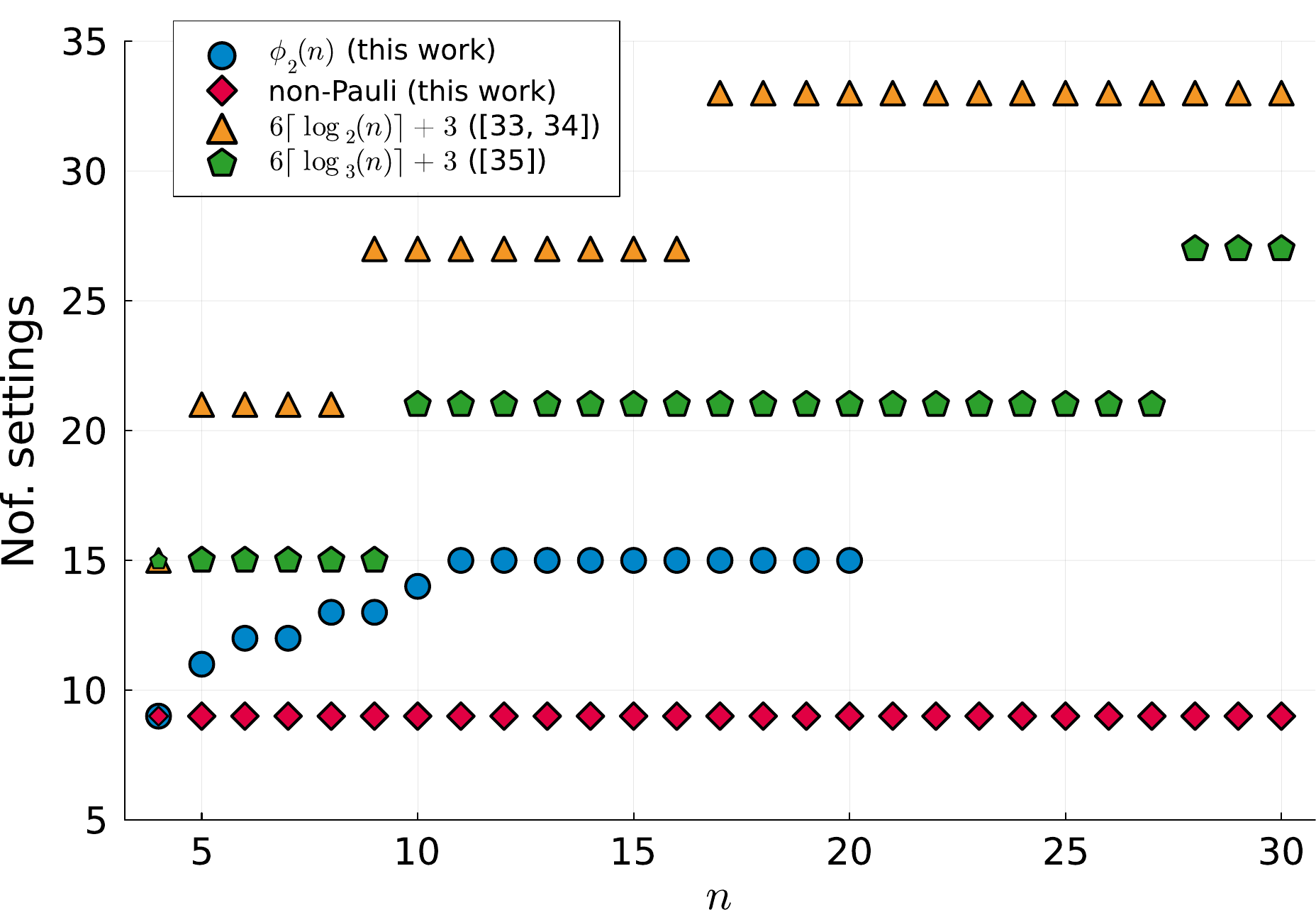}
        \caption{
        Number of measurement settings needed for two-body overlapping tomography of $n$ qubits ($k=2$).
        Blue circles: minimal number of Pauli settings possible, corresponding to covering array numbers \cite{kokkala2020structure}.
        Red squares: minimal number of projective settings, $3^2 = 9$.
        Orange triangles: number of Pauli settings needed in the construction of Refs.~\cite{cotler2020quantum, bonetmonroig2020nearly}.
        Green pentagons: number of Pauli settings needed in the construction of Ref.~\cite{garciaperez2020pairwise}.}
        \label{fig:nbr_of_settings}    
    \end{figure}
    The connection to covering arrays for the special case $k=2$ has been observed before \cite{hou2006controlling, maltais2009covering, ronneseth2009merging, danziger2009covering}, and a hypergraph generalisation to $k>2$ is presented in Appendix \ref{app-sec:k-body_tomo} of SM \supp. 
    Note that Ref.~\cite{berg2023techniques} pointed to a connection between covering arrays and graph colouring to the learning of noise models, which turns out to be equivalent to the problem of partial tomography for all $k$-body marginals.

    The advantage of our graphical formulation is twofold. First, the binary program 
    presented in Eq.~\eqref{eq:mip} can directly be extended to higher values of $k$ and 
    can be used to compute, for instance, $\phi_3(4) = 27$ and $\phi_3(5) = \phi_3(6) = 33$ 
    (see Appendix \ref{app-sec:min_pauli_sets} of SM \supp). 
    To the best 
    of our knowledge, it had not been previously shown that covering arrays corresponding 
    to $\phi_3(5) = 33$ were optimal. 
    Second, the graphical formulation is highly flexible: 
    with appropriate modifications, it can be used to obtain minimal Pauli sets to reconstruct 
    not all marginals, but rather only a subset of them. This problem has been coined local 
    overlapping tomography \cite{araujo2022local}. In the graphical formulation, we 
    define the $n$-vertex connectivity graph $G$ where vertices are connected if the marginal 
    of the corresponding qubits is desired.  
    We explore this in detail in Appendix \ref{app-sec:arbitrary_connect} of SM \supp \ and, notably, we prove that for systems having a connectivity 
    graph with a chromatic number of at most four (such as for instance all planar graphs), 
    two-body overlapping tomography can always be performed with nine Pauli settings, 
    regardless of the number of qubits.

{\it Optimality beyond Pauli measurements---}If the local settings are restricted to Pauli measurements, the number of measurement settings increases with $n$. 
    To improve this, more general settings than just Pauli measurements can be allowed locally. 
    Indeed, this approach ensures that the minimal number of $n$-qubit measurement settings, $3^k$, can always be reached.

    Reconstructing every two-body marginal state can be done by obtaining the 
    expectation values of the nine Pauli observables $\set{X,Y,Z}^{\otimes 2}$ 
    for all pairs of qubits $\set{i,j}$. In terms of Bloch vectors, this means 
    that the two parties choose their measurement directions on the Bloch 
    sphere to be exactly the standard basis $\set{\vec{e}_m}_{m=1}^3$ in
    three-dimensional real space, and measure all possible combinations 
    \begin{equation}
    \meas_{m,l}=  \vec{e}_m \otimes \vec{e}_l,
    \end{equation}
    where the observables can be obtained from this shorthand notation by
    setting $\vec{v} \mapsto v_1 X + v_2 Y + v_3 Z$.

    To obtain tomographically complete information on the two-qubit marginal, it is not important that the nine vectors $\vec{e}_m \otimes \vec{e}_l$ are orthogonal.
    They simply need to be linearly independent and thus form a basis.
    So, the key idea is to randomly choose nine observables for each party instead of 
    the fixed Pauli settings. 
    We denote the respective local measurement directions on the 
    Bloch sphere by $\vec{v}_\alpha^{(i)}$ for all $\alpha \in \set{1, \dots , 9}$, 
    where $i$ labels the party. 
    Intuitively, if those vectors are chosen independently 
    and according to the uniform distribution on the unit sphere, the nine product vectors 
    $\vec{v}_\alpha^{(i)} \otimes \vec{v}_\alpha^{(j)}$ for the pair $\set{i, j}$ should 
    be linearly independent with unit probability.
    Since each party chooses their measurement directions randomly, the above argument holds 
    for every possible pair of qubits.

    This reasoning suggests that choosing nine random measurement directions $\vec{v}_\alpha^{(i)}$ 
    for each of the $n$ qubits and measuring
    \begin{equation}
        \meas_\alpha =  \bigotimes_{i=1}^n \vec{v}_\alpha^{(i)}, \quad \alpha = 1,\dots, 9 
    \end{equation}
    is sufficient to reconstruct all the two-body marginal states of the $n$-qubit system. 
    In Appendix \ref{app-sec:min_settings} of SM \supp \ we show that this intuition indeed holds true for 
    arbitrary $k$. 
    Remarkably, the proof holds for arbitrary local dimension $d$, which
    shows that $k$-body overlapping tomography of qudit systems 
    can be performed with $(d^2-1)^k$ settings. 
    Notice also that the resulting measurements are projective and local, and therefore do 
    not impose extra experimental requirements.
    For completeness, this minimal number of settings is also plotted in Fig.~\ref{fig:nbr_of_settings}.

    Although random measurement directions yield tomographically complete settings for all marginals, some settings perform better than others regarding statistical errors.
    Measurement directions that are too close to each other can lead to large variances in the reconstructed states, particularly when the state is nearly orthogonal to all the measurement directions.
    To assess the quality of a particular set of measurement directions, we use confidence regions informed by measurements variances. 
    Specifically, we use the confidence region $C_A$ from Ref.~\cite{degois2023userfriendly} as a figure of merit, which is straightforward to construct for any measurement scheme.
    Numerically optimised measurement directions and performance comparisons are provided in Appendices \ref{app-sec:conf_regs_and_num_opt} and \ref{app-sec:nbr_samples} of SM \supp.

{\it Experimental demonstration---}We experimentally perform two-body overlapping tomography on 
    a six-photon Dicke state with three excitations, that is,
    \begin{figure}[t]
        \centering
        \includegraphics[width=\linewidth]{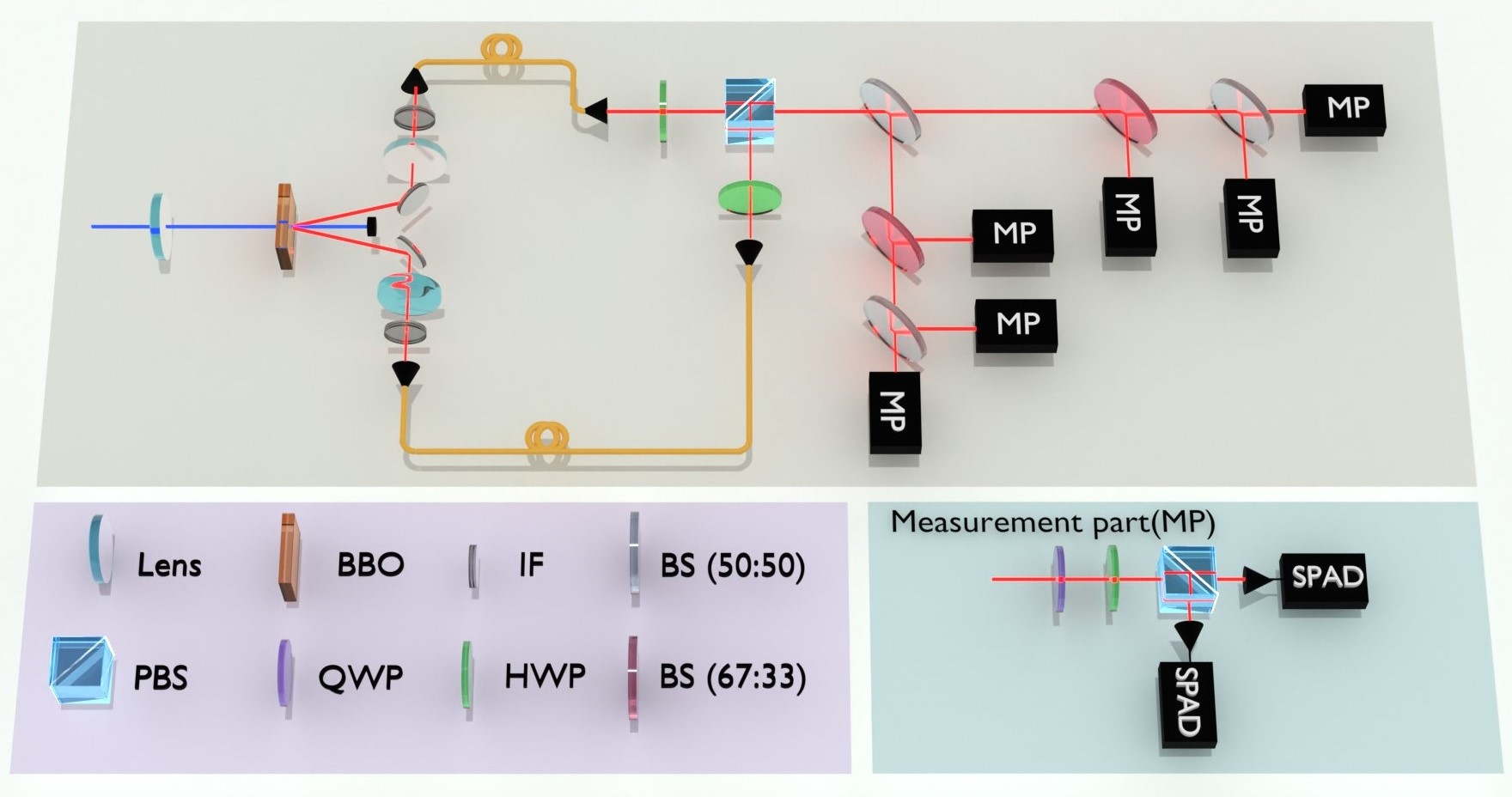}
        \caption{
        Experimental setup for generating and detecting six-photon Dicke state with three excitations.
        Ultraviolet pulses (390 nm, 80 MHz repetition rate, 300 mW average power) are focused onto a type-II beamlike $\beta$-barium borate (BBO) crystal using a lens ($f_1=75$mm) to generate three photon pairs photons simultaneously in the third-order spontaneous parametric down-conversion (SPDC) emissions.
        The photons are recollimated with two lenses ($f_2=100$ mm), spectrally and spatially filtered by interference filters (IFs) 
        ($\Delta\lambda=10$nm) and single-mode fibres, and merged into a single path by a polarising beam splitter (PBS).
        Hong-Ou-Mandel-type interference visibility is $0.92$.
        The six indistinguishable photons are distributed into six output modes using three 50:50 and two 67:33 beam splitters (BSs), achieving a maximal success probability of $5/324$ for detecting one photon in each measurement part (MP).
        Arbitrary polarisation analysis in each output mode is conducted by the MP, composed of a quarter-wave plate (QWP), a half-wave plate (HWP), a PBS, and two single-photon avalanche detectors (SPADs).   
        }       
        \label{fig:setup}    
    \end{figure}
    %
    \begin{equation}
        \ket{D_{6}^{(3)}} = \frac{1}{\sqrt{20}} \sum_{i} \mathcal{P}_{i}(|H H H V V V\rangle),
    \end{equation}
    where $|H\rangle$ ($|V\rangle$) denotes the horizontal (vertical) polarisation and 
    $\sum_{i} \mathcal{P}_{i}(...)$ denotes the sum over all 20 permutations 
    leading to different terms. 
    Dicke states are highly entangled multipartite states, widely recognized for their applications in quantum metrology \cite{toth2012multipartite, saleem2024achieving} and quantum networks \cite{Dickestate2009-zeilinger, chiuri2012experimental, roga2023efficient}.
    Moreover, all their marginal states are entangled \cite{szalay2025dicke}, which motivates using a Dicke state to demonstrate the practicality of overlapping tomography. 

    The setup of the experiment, described in detail in Fig.~\ref{fig:setup}, is registering $\ket{D_{6}^{(3)}}$ 
    with a sixfold coincidence count rate of 7.0 counts per minute. To verify the proper working of our source, we 
    measure the structures in the $X$, $Y$, and $Z$ bases and  
    observe the characteristic features of the Dicke state 
    $\ket{D_{6}^{(3)}}$~\cite{Dickestate2009-weinfurter,Dickestate2014}, which is detailed in Appendix \ref{app-sec:charact_dicke} of SM \supp.

    To conduct the six-qubit full state tomography, $729$ Pauli settings 
    would need to be measured, making the measurement time prohibitively 
    long. 
    Carrying out optimal two-body overlapping 
    tomography in the Pauli scheme requires $\phi_2(6)=12$ settings only (see Appendix \ref{app-sec:min_pauli_sets} of SM \supp \ for
    the specific measurements). 
    We perform these $12$ Pauli measurement settings with an acquisition 
    time for each measurement setting of two hours.  
    By using maximum likelihood estimation (MLE) \cite{MLE2001,altepeter2005photonic}, 
    we reconstruct the physical experimental density matrices of all $15$ 
    two-qubit subsystems $\varrho^{\rm{P}}_{i_1,i_2}$,  where $i_1,i_2\in\set{0,1,...,5}$. Figure \ref{fig:fidelity} (a) depicts the reconstructed marginals of $\varrho^{\rm{P}}_{2,3}$ for the Pauli scheme.
    \begin{figure*}
        \centering
        \includegraphics[width=0.9\linewidth]{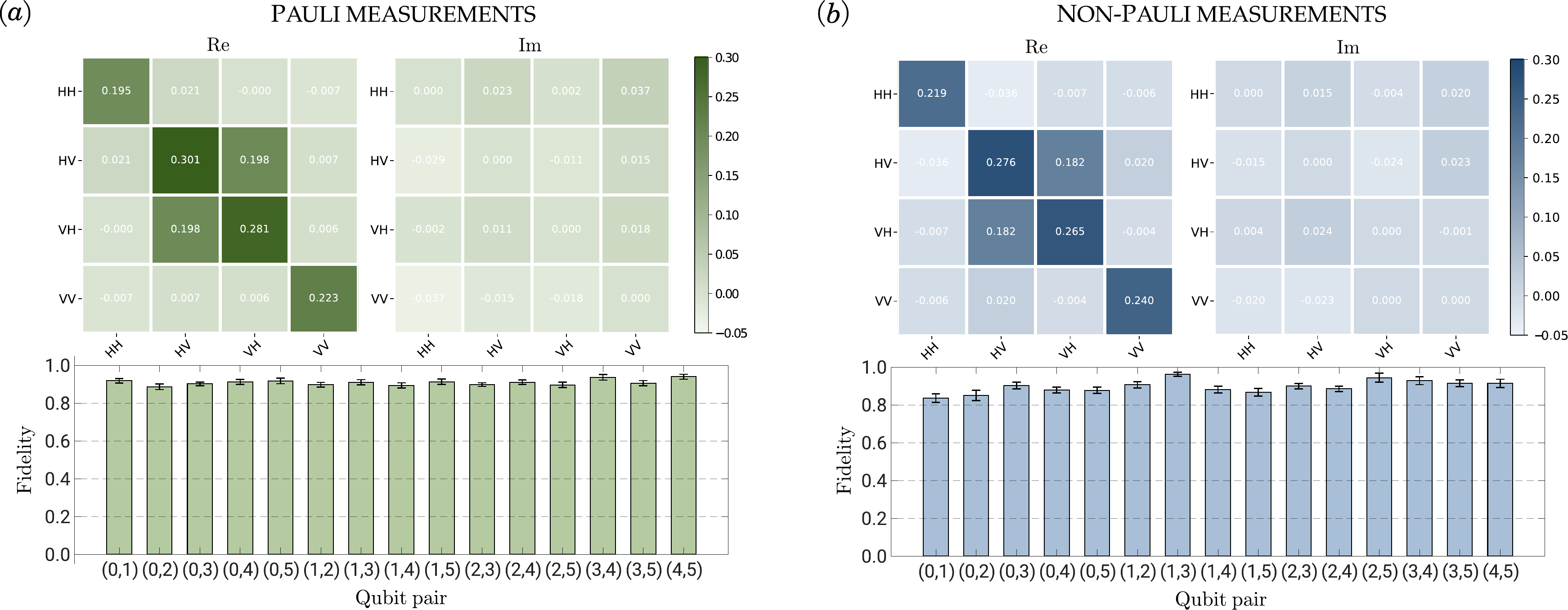}
        \caption{
        Two-qubit reduced states from overlapping tomography of a six-qubit Dicke state.
        (a) Results for the optimal Pauli scheme (12 settings): the real and imaginary parts of the marginal $\varrho_{2,3}^P$ are shown (see Appendix \ref{app-sec:exp} of SM \supp \ for other marginals), along with the two-qubit mixed state fidelities between experimental states $\varrho_{i_1, i_2}^P$ and ideal marginals $\varrho_{i_1, i_2}$ for each pair.
        (b) Similar analysis for the minimal non-Pauli scheme (nine settings).
        In both cases, fidelity error bars indicate $\pm1 \sigma$ uncertainties estimated via Monte Carlo sampling with Poissonian photon statistics.
        }
        \label{fig:fidelity}    
    \end{figure*}
    
    Second, we perform the measurements with the nine non-Pauli directions described in the previous section. The acquisition time for each measurement setting is also set to two hours. Similarly, we reconstructed all two-body marginals $\varrho^{\rm{N}}_{i_1,i_2}$ in the non-Pauli scheme, e.g., $\varrho^{\rm{N}}_{2,3}$ presented in Fig.~\ref{fig:fidelity} (b). 
    The method of reconstruction and all experimental marginals for both schemes are presented in Appendix \ref{app-sec:expe_reconstruct} of SM \supp. 
        
    Subsequently, in order to obtain error estimates, we perform $100$ 
    Poisson distribution samples on the experimental sixfold data using 
    a Monte Carlo approach.
    We obtain the experimental mixed state fidelities   $F_{i_1,i_2}=\left[\operatorname{Tr}\left(\sqrt{\sqrt{\varrho_{i_1,i_2}}\varrho^{\rm{exp}}_{i_1,i_2}\sqrt{\varrho_{i_1,i_2}}}\right)\right]^{2}$ and their error bars, where $\varrho_{i_1,i_2}=\operatorname{Tr}_{j_1,j_2,j_3,j_4}(\ket{D_{6}^{(3)}}\bra{D_{6}^{(3)}})$ denotes the  ideal two-body marginals {and $\set{j_1,j_2,j_3,j_4}=\set{0,1,...,5} \setminus \set{i_1,i_2}$}. As shown in Fig.~\ref{fig:fidelity} (a), for the 
    Pauli scheme, the average experimental fidelity of the $15$ 
    two-body marginals is $0.909$ and the minimum fidelity is $0.887$. 
    The average error bar is $0.013$. By contrast, Ref.~\cite{yang2023experimental} performed 21 Pauli settings on a six-qubit GHZ state for around 80 hours to reconstruct its two-body marginals, achieving an average fidelity of $0.848\pm0.013$.

    Figure \ref{fig:fidelity} (b) shows that 
    the average fidelity of reconstructed two-body marginals for non-Pauli Scheme is $0.896$ 
    and the average error bar is $0.018$. 
    We also observe that requiring a minimal number of nine non-Pauli 
    settings comes at the cost of slightly larger error bars {with the same acquisition time for each setting, which is further discussed in Appendix \ref{app-sec:error_bars} of the SM \supp.} 
    Our findings demonstrate in theory and in practice that 
    overlapping tomography can be used to drastically reduce 
    experimental requirements, both in the number of measurement 
    settings and measurement time.

{\it Discussion---}In this Letter, we have addressed the problem of optimising measurement scheduling for marginal tomography, combining theoretical analysis with experimental validation.
On a purely mathematical side, the connections to graph theory and combinatorics may be used  to tackle problems in quantum information theory from a novel perspective. 
For example, the concept of covering arrays generalises orthogonal arrays, which have recently led to substantial progress in the study of absolutely maximally entangled states \cite{goyeneche2018entanglement}, leading to the 
solution of one of the five central open problems in quantum 
information theory \cite{horodecki2020five, rather2022thirty}.
These connections demonstrate the potential of interdisciplinary approaches in quantum information theory.

On a more practical side, the presented optimal protocols are expected to 
find applications in quantum simulations, such as the characterisation of local 
Hamiltonians \cite{knauer2017experimental, evans2019scalable, gebhart2023learning, huang2023learning, olsacher2024hamiltonian, stilck2024efficient}.
They can also be used for analysing noise and imperfections in realistic quantum computing devices \cite{samach2022lindblad, vandenberg2023probabilistic, jaloveckas2023efficient, berg2023techniques, wagner2023learning, raza2024online}. 
In addition, these techniques can be extended to measure 
few-body correlations in fermionic or bosonic systems, 
as well as to characterise and certify quantum gates in different registers of a quantum computing architecture in 
parallel. 
The general applicability of our approach to systems of any size and geometry highlights partial quantum tomography's potential as a scalable and efficient tool for addressing key challenges in large-scale quantum information processing.

\textit{Acknowledgments---}We thank Adrian Aasen, Maximilian Hess, Jose Este Jaloveckas, Matthias Kleinmann, Chau Nguyen, Lilly Palackal, Martin Pl\'avala, Jonathan Steinberg, Konrad Szyma\'nski, Ewout van den Berg, Pawel Wocjan, and Xiao-Dong Yu for discussions. 
We also acknowledge the usage of ChatGPT, which mentioned the keyword ``covering array'' in relation to one of our combinatorial problems.
This work has been supported by the Deutsche Forschungsgemeinschaft (DFG, German Research Foundation, projects No.\ 447948357 and No.\ 440958198, the Sino-German Center for Research Promotion (Project M-0294), and the German Ministry of Education and Research (Project QuKuK, BMBF Grant No.\ 16KIS1618K). 
R. Q., H. W., Y. M., Z. Y. and W. G. acknowledge Singapore Quantum engineering program (NRF2022-QEP2-02-P14).   
K. H., L. T. W., and C. de G. acknowledge support by the House of Young Talents of the University of Siegen.

\bibliography{bibliography}

\begin{appendix}

\onecolumngrid
\newpage
\newgeometry{margin=9.0em}

\thispagestyle{empty}
\begin{center}
    \textbf{\large Supplemental Material
    for ``Optimal overlapping tomography''}

    Kiara Hansenne,\footnote{kiara.hansenne@ipht.fr} Rui Qu,\footnote{rui.qu@ntu.edu.sg} Lisa T.\@ Weinbrenner, Carlos de Gois, Haifei Wang, Yang Ming, Zhengning Yang, Pawe\l \ Horodecki, Weibo Gao,\footnote{wbgao@ntu.edu.sg} and Otfried Gühne \footnote{otfried.guehne@uni-siegen.de} 
\end{center}

\section{Pauli tomography for arbitrary sets of marginals} \label{app-sec:arbitrary_connect}
    In this appendix, we thoroughly discuss how to obtain tomographically complete Pauli sets for given subsets of qubits.
    In the first section, we focus on two-body partial tomography ($k=2$) and extend the graph construction and binary program of the main text to cases where not all marginals of a given size are needed.
    We then generalise these results to higher $k$s, and illustrate this by computing a minimal Pauli set for a seven-qubit system where certain three-qubit marginals need to be reconstructed. 

    For large systems, minimal Pauli sets might be impractical to compute due to the excessive number of variables in the binary program.
    Therefore, we present two alternative approaches for finding Pauli sets and discuss their optimality.
    The first approach relies on graph colouring and is well suited for relatively small sets of marginals, while the second approach involves an explicit construction for the reconstruction of all marginals for which we provide a comparison with previously existing methods.

    Finally, we provide the minimal Pauli sets that have been computed and used in this manuscript.
    
    \subsection{Two-body marginal tomography}
        We start by addressing the problem of two-body partial tomography ($k=2$) where we are interested in some (i.e.\@, not necessarily all) of the two-qubit marginals of an arbitrary $n$-qubit state $\varrho$.
        
        In partial tomography, the specific marginals of $\varrho$ that must be obtained are determined by the physical problem under consideration. 
        In some instances, we may require all of the $\nicefrac{n(n-1)}{2}$ two-qubit marginals, while in other cases, only a subset may be relevant, such as the nearest neighbours in many-body systems.
        We encode this information in a connectivity graph $G$ with $n$ vertices. Each of these vertices is associated with one of the $n$ qubits,
        and we connect the vertices $i$ and $j$ if and only if the marginal $\varrho_{i,j} = \text{tr}_{[n] \setminus i,j}(\varrho)$ is wanted, with $[n] = \{1,\dots, n\}$. 
    
        For the sake of concreteness, let us first focus on the case of three qubits with a line connectivity graph (Fig.~\ref{app-fig:3qubit_graph} (a)), 
        which means we must reconstruct the marginal states $\varrho_{1,2} = \text{tr}_3(\varrho)$ and $\varrho_{2,3} = \text{tr}_1(\varrho)$. 
        We are thus looking for a minimal Pauli set that covers the two-body Pauli operators for the pair $\{1,2\}$ and for the pair $\{2,3\}$.
    
        \begin{figure*}[h]
            \centering
            \includegraphics[width=.9\linewidth]{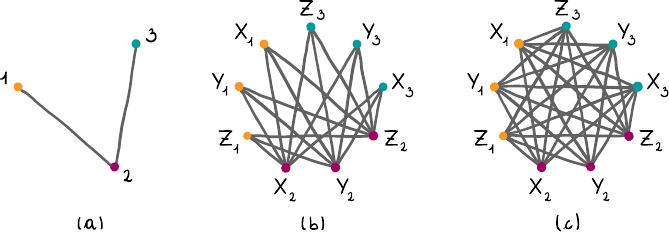}     
            \caption{%
            (a) Connectivity graph $G$ of three qubits. 
            The set of edges $\{\{1,2\}, \{2,3\}\}$ represents the two-qubit marginals that are desired, i.e., we aim to reconstruct the marginals $\varrho_{12}$ and $\varrho_{23}$ of a three-qubit state $\varrho$. 
            (b) Covering graph $G^{\times 3}$ of $G$. 
            Each edge represents a two-body Pauli operator that is needed to recover the two-qubit marginals.
            For instance, the expectation value of $X_1Y_2$ is required and therefore the edge connecting the vertices $X_1$ and $Y_2$ is drawn in the covering graph.
            On the other hand, the expectation value of $X_1Z_3$ is not needed, and the vertices $X_1$ and $Z_3$ are not connected.
            (c) Measurement graph $K_{3,3}$. 
            Each triangle represent a possible three-qubit Pauli setting.
            As an example, the vertices $Y_1$, $X_2$, and $X_3$ are pairwise connected and represent the measurement setting $Y_1X_2X_3$.
            From the measurement data of $Y_1X_2X_3$, it is possible to obtain the expectation values of $Y_1X_2$, $Y_1X_3$, and $X_2X_3$.
            We therefore say that the three-qubit Pauli setting $Y_1X_2X_3$ covers $Y_1X_2$, $Y_1X_3$, and $X_2X_3$, inspired by the fact that the triangle connecting the vertices $Y_1$, $X_2$, and $X_3$ covers the edges $Y_1X_2$, $Y_1X_3$, and $X_2X_3$.
            }
            \label{app-fig:3qubit_graph}
        \end{figure*}
    
        This can easily be described with the help of two additional graphs.
        First, we construct a graph in which the edges represent the two-qubit settings required for reconstructing the marginals. 
        So, in this case, the eighteen two-body Pauli operators for the pairs of qubits $\{1,2\}$ and $\{2,3\}$.
        To do that, we instantiate three vertices per qubit (each of them representing a Pauli operator), and connect the vertices for which the corresponding two-body Pauli is required (Fig.~\ref{app-fig:3qubit_graph} (b)). 
        It is called the covering graph and we define the notation $G^{\times 3}$.
        
        The last graph is used to represent all possible $n$-qubit Pauli settings. 
        It has the same set of vertices as $G^{\times 3}$, and two vertices are connected if and only if they represent single Pauli operators corresponding to different qubits, resulting in a complete $n$-partite graph $K_{n,3}$.
        Therein, an $n$-qubit Pauli setting is represented by the complete subgraph (clique) on the vertices corresponding to the Pauli setting, similarly to the main text.
        For instance, in our example, the three-qubit Pauli setting $Y_1 X_2 Z_3$ is represented by the clique with vertices $Y_1$, $X_2$ and $Z_3$, thus forming a triangle in the measurement graph (Fig.~\ref{app-fig:3qubit_graph} (c)).  
        Note that each of these cliques covers three two-body Pauli settings, which in this case are $Y_1X_2$, $Y_1Z_3$ and $X_2Z_3$.
        
        From this description, it is obvious that finding a minimal Pauli set that recovers the marginals specified in $G$, is equivalent to finding the minimal set of maximal cliques of $K_{3,3}$ that can cover all the edges of $G^{\times 3}$. 
        Furthermore, this formulation can easily be extended to $n$ parties and arbitrary connectivity graphs $G$: 
        First, we let the edges of $G^{\times 3}$ --- also denoted by $E(G^{\times 3})$ --- represent all the two-body Pauli operators whose expectation values need to be known. 
        Then, the maximal cliques of $K_{n, 3}$ --- henceforth $C(K_{n,3})$ --- represent all the possible $n$-qubit Pauli settings.
    
        In the main text, we aimed at reconstructing all the two-qubit marginal states, and thus needed the expectation values of all two-body Pauli operators. 
        In that case, the covering graph is also the $n$-partite complete graph $K_{n, 3}$ and we recover the main result: 
        The covering graph is the same as the measurement graph, hence we only need one graph to solve the problem.
    
        The binary program of Eq.~\eqref{eq:mip} in the main text can easily be extended by taking the edges and the cliques of the covering graph and the measurement graph respectively. 
        The cardinality $\phi_2$ of a minimal set of cliques of $K_{n, 3}$ needed to cover the edges of $G^{\times 3}$ can be written as a function of $G$, as $G$ contains all the information of the problem.
        We can determine $\phi_2(G)$ with a binary program that is a direct extension of
        Eq.~\eqref{eq:mip},
        \begin{subequations}\label{app-eq:mip}
        \begin{align}
            \phi_2(G) = \;
            &\min_{\{ z_c \}_{c \in C(K_{n,3})}} \sum_{c \in C(K_{n,3})} z_c \\
            &\;\;\;\text{subject to} \nonumber \\
            &\quad\;\;\; z_c \in \{0, 1\}, \,\forall c \in C(K_{n,3}) \\
            &\quad\;\;\; \sum_{c \in C(K_{n,3})} z_c e_c \geq 1, \,\forall e \in E(G^{\times 3}) ,
        \end{align}
        \end{subequations}
        where $e_c = 1$ if $e \in c$, and $0$ otherwise. 
        Similarly to the main text, each variable $z_c$ is an indicator variable indexed by a clique $c$ of $K_{n,3}$. 
        A clique is part of the optimal solution (i.e., active) if and only if $z_c = 1$. 
        The second constraint means that for every edge $e$ of $G^{\times 3}$, there must be at least one clique $c$ which contains $e$ and is active. 
        Thus, it guarantees that all edges of $G^{\times 3}$ are covered by the active cliques.
        
        Since $\abs{C(K_{n,3})} = 3^n$, large instances may become too expensive to compute. 
        However, we show in Sec.~\ref{app-sec:colouring_construction} that for many physically motivated connectivities, the minimal Pauli set can be reduced to a small instance of the program, and we discuss in Sec.~\ref{app-sec:recursion} how minimal Pauli sets for a few parties can be extended to larger cases.

    \subsection{$k$-body partial tomography} \label{app-sec:k-body_tomo}
        Although most previous methods focused on two-body partial tomography \cite{cotler2020quantum, bonetmonroig2020nearly, garciaperez2020pairwise}, some physical problems may require $k$-body marginals of higher order. 
        In this section, we show how the graph formulation for two-body partial tomography presented in the previous section can be generalised to $k$-body marginals by using hypergraphs.
    
        \begin{figure}
            \centering
            \includegraphics[width=.4\linewidth]{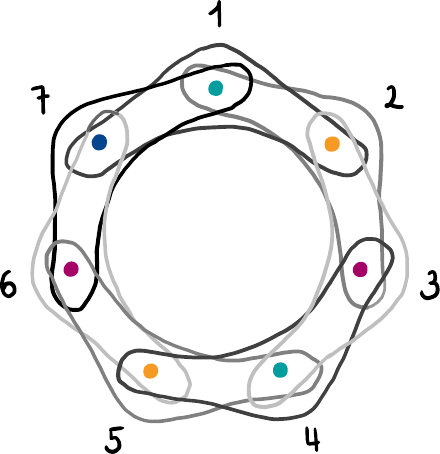}
            \caption{%
            Connectivity hypergraph $H_7$ of seven qubits in a ring where the marginals of each consecutive triplet of qubits are desired, i.e., $\varrho_{123}$, $\varrho_{234}$, $\varrho_{345}$, $\varrho_{456}$, $\varrho_{567}$, $\varrho_{671}$, and $\varrho_{712}$.
            The grey shades of the edges are for better readability.
            }
            \label{app-fig:hypergraph_example}
        \end{figure}
        
        To do that, we start by encoding the desired marginals in a connectivity hypergraph $H$ where each hyperedge connects $k$ vertices representing the qubits from the marginal states.
        As an example, Fig.~\ref{app-fig:hypergraph_example} depicts the hypergraph $H_7$ representing seven qubits in a ring configuration where, for each qubit $i$, we want to have access to the three-qubit marginal of the triplet $\{i-1, i, i+1\}$.  
        The covering hypergraph $H^{\times 3}$ follows the same idea than in the case of $k=2$: Each hyperedge represents a $k$-body Pauli operator whose expectation value is required to reconstruct the marginals. 
        Lastly, the cliques (for hypergraphs, a clique is a subgraph in which any $k$ vertices are connected by a hyperedge) of the measurement graph $K_{n,3,k}$ represent all possible $k$-body Pauli settings.    
        Similarly to the case of $k=2$, an $n$-qubit Pauli setting is a maximal clique in $K_{n,3,k}$.
        Both hypergraphs $H^{\times 3}$ and $K_{n,3,k}$ have $3n$ vertices, one for each single-qubit Pauli setting.
    
        Straightforwardly, optimal $k$-body partial tomography is equivalent to finding a minimal edge covering of $H^{\times 3}$ with cliques of $K_{n, 3, k}$.
        Whereas hypergraphs quickly become cumbersome to draw on paper, Eq.~\eqref{app-eq:mip} extends directly by mapping $K_{n,3} \mapsto K_{n,3,k}$ and $G^{\times 3} \mapsto H^{\times 3}$. 
        Using this extension, we computed $\phi_3(H_7) = 27$ and the corresponding minimal Pauli set is presented in Sec.~\ref{app-sec:min_pauli_sets}. 
        In the next section, we discuss how this result can also be achieved through an explicit construction of the settings.
        
        As with the formulation for $k=2$, increasing the number of parties leads to a considerable computational cost. 
        Nevertheless, minimal Pauli sets for overlapping tomography of all three-body marginals for systems up to six qubits were computed, and are presented in Sec.~\ref{app-sec:min_pauli_sets}.

    \subsection{Reduction for large number of qubits} \label{app-sec:colouring_construction}
        The partial tomography problem can be mapped to a graph covering problem, which can in turn be formulated as a binary program and therefore solved exactly. 
        However, for large connectivity graphs, those techniques can quickly reach their limits regarding what can actually be solved by standard computers. 
        Fortunately, for many physically motivated classes of connectivity graphs, we show in this section how the minimal Pauli sets can be mapped to small instances of the binary program \eqref{app-eq:mip} and optimally computed.
        Again, we start by presenting an example for the sake of clarity, and move on to the general case at the end of the section.
        
        Consider $16$ qubits in a square lattice configuration, in which we aim at reconstructing the two-body marginals of each pair of first and second neighbours (see Fig.~\ref{app-fig:grid_ex} for the connectivity graph $G_{16}$).
        \begin{figure}
            \centering
            \includegraphics[width=.35\linewidth]{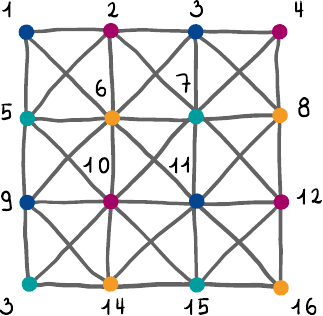}
            \caption{Connectivity graph $G_{16}$ of $16$ qubits in a square grid, where the two-body marginals of the first and second neighbours are required.}
            \label{app-fig:grid_ex}
        \end{figure}
        For 16 qubits, there are $3^{16} > 10^{7}$ possible $16$-qubit Pauli settings, which require as many variables in the binary program and therefore the problem is not solvable with standard computers.
        Regardless, we directly notice that maximal cliques of $G_{16}$ put a lower bound on $\phi_2(G_{16})$. 
        Indeed, for the qubits $1$, $2$, $5$, and $6$, we need to reconstruct the marginals of all six pairs of qubits, therefore at least nine Pauli settings are needed, and we formalise
        \begin{equation} \label{app-eq:lower_bound_g16}
            \phi_{2}(4) \leq \phi_2(G_{16}).
        \end{equation}
        Recall that $\phi_k(n)$ denotes the cardinality of minimal Pauli sets for the tomography of all $k$-body marginals of an $n$-qubit state and $\phi_k(G)$ denotes the cardinality of minimal Pauli sets for the tomography of the $k$-body marginals corresponding to the edges of the (hyper)graph $G$.
        In this particular example, $\phi_{2}(4) = 9$ also happens to be the strict minimum of Pauli settings for two-body tomography.
    
        We proceed by taking a minimal Pauli set for four qubits and associate one colour to each qubit, as shown in Fig.~\ref{fig:grid_meas} (a).
        Each party has nine single-qubit Pauli settings of one colour.
        Then, using the same four colours, we colour the $16$ vertices of the connectivity graph $G_{16}$ in such a way that no adjacent vertices (qubits) have the same colour. 
        A possible way of doing that is presented in Fig.~\ref{app-fig:grid_ex}.
        We construct a minimal Pauli set in the following way: To each party $i \in \{1,\dots, 16 \}$, we associate the colour given by the graph colouring of $G_{16}$, and later the single-qubit Pauli settings of the same colour given by the four-qubit minimal Pauli set of Fig.~\ref{fig:grid_meas} (a) (this minimal Pauli set was already present in
        {Fig.~1 of the main text}
        ).
        The resulting minimal Pauli set is given in Fig.~\ref{fig:grid_meas} (b) and we obtain $\phi_2(G_{16})=9$.
        We can easily check that the Pauli settings indeed form a Pauli set for $G_{16}$: Any two connected qubits in $G_{16}$ have different colours, and any two single-qubit Pauli settings of different colour recover all the two-body Pauli operators, as ensured by the minimal Pauli set of Fig.~\ref{fig:grid_meas} (a).
        From Eq.~\eqref{app-eq:lower_bound_g16}, it is clear that the Pauli set is minimal.

        \begin{figure}[h]
            \centering
            \includegraphics[scale=.35]{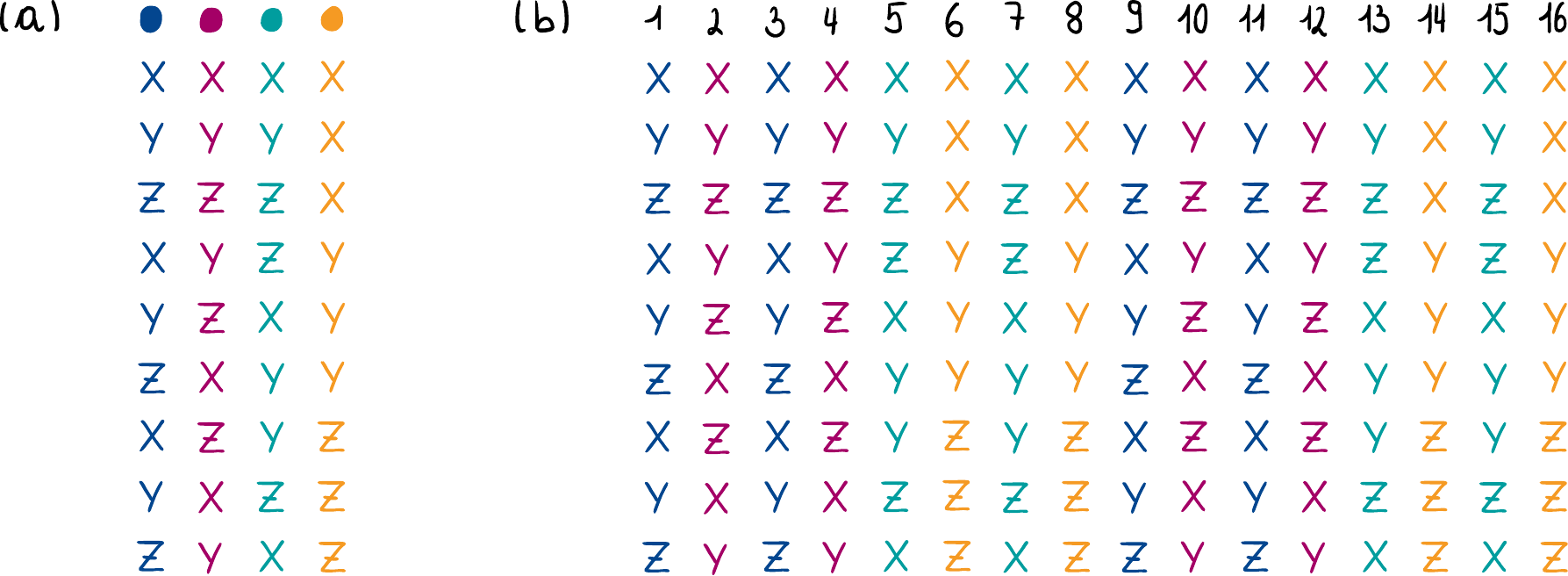}        
            \caption{%
                (a) A minimal Pauli set with nine elements for four qubits where all the two-body marginals are required. 
                It is easy to check that any two columns cover the nine two body Pauli operators.
                Each column (corresponding to settings on one qubit) is given one distinct colour.
                (b) A minimal Pauli set for $G_{16}$ of Fig.~\ref{app-fig:grid_ex}.
                The colours of the columns are given by the colours of the corresponding vertices in Fig.~\ref{app-fig:grid_ex}, and the single-qubit settings are copied accordingly from (a).
                This ensures that any two columns with distinct colours cover the nine two-body Pauli operators.
                }
            \label{fig:grid_meas}
        \end{figure}

        There exist, however, connectivity graphs for which such a construction is not possible.
        Indeed, in graph theory, the clique number $\omega(G)$ of any graph $G$ is always smaller or equal to its chromatic number $\chi(G)$.
        The clique number is given by the number of vertices in a maximal clique of $G$, and the chromatic number is the smallest number of colours needed to colour adjacent vertices with different colours.
        Graphs $G$ that fulfil $\omega(G) = \chi(G)$, such as $G$ of Fig.~\ref{app-fig:grid_ex}, are called weakly perfect (we note that the requirement for a graph $G$ to be called perfect is stronger: The clique and chromatic numbers need to be equal for any induced subgraph of $G$).
        For arbitrary connectivity graphs $G$, we thus have to consider a minimal Pauli set for $\chi(G)$ qubits, associate a distinct colour to each qubit and then proceed as described above. 
    	This results in a Pauli set, as each pair of connected qubits in $G$ has two-qubit Pauli settings that allow for the reconstruction of all the two-body Pauli expectation values.  
        The advantage of this technique is that it can drastically reduce the size of the problem.
    
        Unfortunately, this construction does not ensure minimality of the number of Pauli settings when $\omega(G) < \chi(G)$. 
        Indeed, there might be a more efficient covering of $G^{\times 3}$ than the one suggested by the above construction and thus a different Pauli set that solves the partial tomography of $G$ with less settings. 
        We summarise this statement in the following sandwich equation:
        \begin{equation}
            \phi_2({\omega(G)}) \leq \phi_2(G) \leq \phi_2({\chi(G)}).
        \end{equation}
        Recall that $\phi_k(n)$ (here $n = \omega(G)$ or $n = \chi(G)$) denotes the cardinality of minimal Pauli sets for the tomography of all $k$-body marginals of an $n$-qubit state.
        It is interesting to determine whether $\phi_2(G)$ can be strictly smaller than $\phi_2({\chi(G)})$, since it would determine the optimality of the construction.
        Trying to answer this, we considered a somewhat artificial connectivity graph with seven vertices (qubits) and edges as shown in Fig.~\ref{app-fig:g26}. 
        \begin{figure}[h]
            \centering
            \includegraphics[width=.3\linewidth]{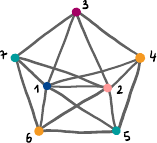}
            \caption{Connectivity graph $G_7$ for seven qubits. 
            The largest complete subgraphs of $G_7$ contain four vertices, for instance $1$, $2$, $3$, and $7$.
            This is reflected by the clique number of the graph, which is here equal to four, $\omega(G_7) = 4$.
            The chromatic number of $G_7$, which is given by the smallest possible number of colours to colour the vertices of the graph such that no connected vertices have the same colour, is equal to five, $\chi(G_7) = 5$.
            The colouring construction gives a Pauli set with $\phi_2(5) = 11$ elements, which turns out to be equal to the number of settings in the minimal Pauli set obtained by running the binary program \eqref{app-eq:mip}, i.e., $\phi_2(G_7) = \phi_2(\chi(G_7))$.}
            \label{app-fig:g26}
        \end{figure}
        
        It is the only connected graph with seven vertices whose clique number is four, whereas its chromatic number is five, which leads to $9 \leq \phi_2(G_7) \leq 11$.
        By running the binary program \eqref{app-eq:mip}, we obtain $\phi_2(G_7) = 11$, certifying that minimal Pauli sets of $G_7$ have $11$ settings.
        Similarly, we ran Eq.~\eqref{app-eq:mip} for all $26$ non-isomorphic, connected, not weakly perfect graphs with eight vertices, and did not find an instance where the colouring construction did not give a minimal Pauli set.
        In other words, there are no connectivity graphs with at most eight vertices for which $\phi_2(G) \neq \phi_2(\chi(G))$. 
        It thus remains an open question whether there are connectivity graphs for which the colouring construction does not lead to a minimal Pauli set.
    
        Moreover, since equality between the clique and chromatic numbers results in an optimal solution for the partial tomography problem, the colouring construction leads to minimal Pauli sets for many physically motivated classes of connectivity graphs.
        It is worth noting that the colouring construction is also optimal for graphs for which $\phi_2(\omega(G)) = \phi_2(\chi(G))$ despite the clique and chromatic numbers not being equal.
        For example, for $n=2, 3, 4$, $\phi_2(n) = 9$, and for $n=11, \dots, 20$, $\phi_{2}(n) = 15$ \cite{kokkala2020structure}.
        This allows us to construct minimal Pauli sets for many different classes of connectivity graphs.
        For instance, the grid example of Fig.~\ref{app-fig:grid_ex} can be extended to an arbitrary large number of qubits, and the cardinality of the minimal Pauli set remains equal to nine.
        In fact, partial tomography of any connectivity graph $G$ verifying $\chi(G) \leq 4$ can be realised with nine Pauli settings.
        This includes all planar graphs (i.e., graphs that can be drawn in the Euclidean plane in such a way that no edges cross each other, such as lines, cycles and two-dimensional lattices) \cite{appel1977solution}, as well as three-dimensional grids.
    
        The colouring construction can be generalised to higher-body marginals, that is, to $k>2$.
        Given a connectivity hyper graph $H$, a lower bound on $\phi_k(H)$ is directly put by $\phi_k(\omega(H))$, where $\omega(H)$ is the number of vertices in the largest complete subgraph of $H$.
        We recall that an $l$-vertex hyper graph with edge size $k$ is complete when its edge set is given by all $k$-subsets of $\{1,\dots,l\}$, that is $\{e \subset \{1,\dots,l\} \mid \abs{e} = k \}$ \cite{bretto2013hypergraph}.
        The natural extension of the $k=2$ case suggests to colour the vertices of $H$ such that vertices contained in the same edge have different colours.
        This is known as a strong colouring of $H$, and the smallest number of colours needed is the chromatic number of $H$, $\chi(H)$ \cite{bretto2013hypergraph}.
        Pauli sets are then constructed analogously to $k=2$: One considers a minimal Pauli set for $\chi(H)$ qubits where all the $k$-body marginals are needed, and associates a colour to each qubit. 
        Then, the Pauli set for partial tomography of $H$ is constructed by taking the single-qubit Pauli settings of the minimal Pauli set for $\chi(H)$ following the colouring of $H$, exactly as in the case of two-body partial tomography.
        Similarly, when $\phi_k(\omega(H)) = \phi_k(\chi(H))$, the Pauli set is minimal.
        If we look at the ring hypergraph $H_7$ of Fig.~\ref{app-fig:hypergraph_example}, we have $\omega(H_7) = 3$ and $\chi(H_7)=4$ and since $\phi_3(3) = \phi_3(4) = 27$, we recover that the partial tomography of $H_7$ can be performed with a minimal number of $27$ Pauli settings.
        We note that the authors of Ref.~\cite{friis2018observation} already realised that for line connectivities, marginal tomography can always be realised with $3^k$ settings, as in that case, $\chi(H) = k$.
        The case of ring connectivites is thus less trivial, as the chromatic number depends on $n$ and $k$, as discussed in the next paragraph.
    
        This construction generalises and unifies the results of Araújo \emph{et al.\@} in Ref.~\cite{araujo2022local} concerning qubits.
        First, a general construction is proposed for connectivity hypergraphs where the vertices are ordered in a lattice, and where the hyperedges have a periodic structure. 
        However, because of the generality of their construction, it is argued that the number of Pauli settings can be reduced by looking at specific cases.
        A similar idea to the graph colouring is also introduced, however, not connected to smaller instances.
        Concerning $k=2$, a general construction leading to the costly number of settings of $3^{\chi(G)}$ is presented.
        Concerning higher $k$s, a few cases of hypergraphs that can be coloured using only $k$ colours (the size of the hyperedges) that is, $\chi(H) = k$, are discussed.
        This is recovered by our construction, and we add that $k$-body overlapping tomography of a connectivity hypergraph $H$ can be done with $3^k$ Pauli settings if $\chi(H) \leq k+1$.
        This comes from the fact that minimal Pauli sets for $k$-body overlapping tomography of $k+1$ qubits have $3^k$ elements \cite{torres2013survey}.
        Finally, cyclic topologies, such as cycles and toruses, are discussed.
        There again, the number of Pauli settings can be improved using our colouring construction. 
        For instance, it is know that cycle hypergraphs with hyperedge size $k$ have a chromatic number equal to $k + \lceil \nicefrac{r}{q} \rceil$ where $q$ is the quotient and $r$ the remainder of $n$ divided by $k$ (see Theorem 3.1 of Ref.~\cite{pk2019distance}).
        So, when $r \leq q$ (which is in particular satisfied when $n \geq k^2-1$), the chromatic number is at most $k+1$ and thus minimal Pauli sets have $3^k$ elements, for which the method described in Ref.~\cite{araujo2022local} needs twice as many settings.
        The case of $H_7$ is again recovered, as its chromatic number is four, which is equal to $k+1$.
        The strong colouring of $H_7$ is shown in Fig.~\ref{app-fig:hypergraph_example}.
    
        We recently became aware of similar constructions in the context of sparse Pauli-Lindblad noise models \cite{berg2023techniques,jaloveckas2023efficient}.
        The authors of Ref.~\cite{berg2023techniques} consider a similar problem of finding $n$-qubit Pauli operators that cover some lower order Pauli operators. 
        They independently notice that this problem corresponds to covering arrays, and present a similar construction based of vertex colouring.
        In Ref.~\cite{jaloveckas2023efficient}, the authors present a construction to recover all two-body marginals of $n$-qubit systems that requires $3(1+2 \lceil \log_2(n-2) \rceil$ Pauli settings.
        We also note that Theorem 2 of Ref.~\cite{jaloveckas2023efficient} states that for connectivity graphs $G$ with $\omega(G) \leq 4$, the number of elements in minimal Pauli sets is nine.
        We have shown that this is not true, and refer to $G_7$ of Fig.~\ref{app-fig:g26} as a counterexample.

    \subsection{On the optimality of constructions for $k=2$} \label{app-sec:recursion}
        Unfortunately, in the case of complete connectivity graphs, the colouring construction does not reduce the size of the problem. 
        In this case, it can be convenient to resolve to explicit constructions, which might come at the cost of optimality. 
        In Ref.~\cite{garciaperez2020pairwise}, the authors propose a method to construct Pauli sets for two-body overlapping tomography that translates to an upper bound on $\phi_2$, 
        \begin{equation} \label{app-eq:garcia_scaling}
            \phi_2(n) \leq 6 \lceil \log_3(n) \rceil +3 .
        \end{equation}
        Alternatively, a well-known recursive construction for covering arrays shows that, from Pauli sets for $n_1$ and $n_2$ qubits, it is possible to build a Pauli set for $n_1n_2$ qubits \cite{colbourn2004combinatorial, torres2013survey}.
        We present here a slightly modified version that requires one less setting, and show that the recursive construction also leads to an upper bound on $\phi_2$.
        
        The recursive construction exemplified in Fig.~\ref{app-fig:recursive}, and goes as follows.
        \begin{figure}
            \centering
            \includegraphics[scale=.35]{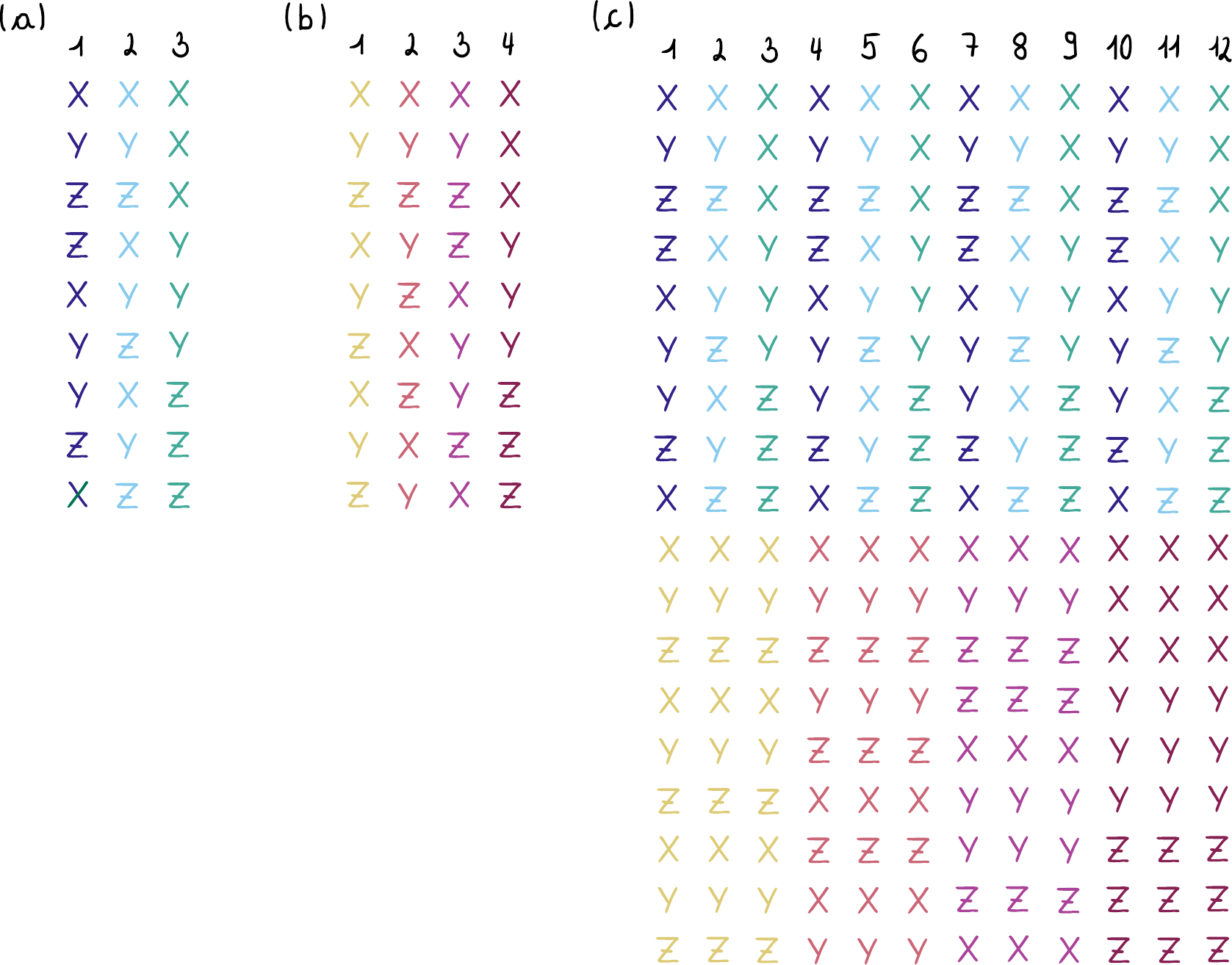}
            \caption{%
            Recursive construction of a Pauli set. 
            (a), (b) Minimal Pauli set on $n_1=3$ and $n_2=4$ qubits respectively for reconstructing all two-body marginals. 
            (c) Recursively constructed Pauli set on $3\times 4=12$ qubits, where the colours correspond to the single-qubit settings of (a) and (b). 
            From the first nine measurements, all two-body marginals can be obtained except for pairs which have the same colour, such as for instance the qubit pair $\{1,4\}$. 
            Similarly, the marginal of the qubit pair $\{1,2\}$ cannot be reconstructed from the last nine measurements alone, as the settings only cover $XX$, $YY$, and $ZZ$.
            However, when considering all $18$ Pauli settings, any qubit pair is covered by the nine two-body Pauli operators.
            Indeed, for each qubit pair either the upper or the lower nine settings have different colours, therefore covering the nine combinations $XX, XY, XZ, YX, YY, YZ, ZX, ZY ,ZZ$.}
            \label{app-fig:recursive}
        \end{figure}
        First, assume we know (not necessarily minimal) Pauli sets for $n_1$ and $n_2$ qubits, with sizes $m_1 \geq \phi_2(n_1)$ and $m_2\geq \phi_2(n_2)$ respectively. 
        Write them as $\mathcal{M}_1 = \set{ M_{\alpha}^1 }_{\alpha=1}^{m_1}$ and $\mathcal{M}_2 = \set{ M_{\alpha}^2 }_{\alpha=1}^{m_2}$. 
        Therein, $M_\alpha^\ell$ is an $n_\ell$-qubit Pauli operator for $\ell=1,2$ and $\alpha \in \{1,\dots, m_\ell\}$.
        Without loss of generality, assume that $X^{\otimes n_\ell}$ is part of both Pauli sets ($\ell =1, 2$).
        Then, take the Pauli settings $\set{(M_{\alpha}^1)^{\otimes n_2}}_{\alpha=1}^{m_1}$, which amounts to $m_1$ Pauli settings acting on $n_1n_2$ qubits. 
        It is easy to see that all the two-body marginals can be obtained for all pairs of qubits, except for the pairs $(x n_1 + z, y n_1 + z)$ for $x,y = 0, \dots, n_2-1$, provided $x<y$, and $z=1,\dots, n_1$. 
        To amend this, we complete the set of Pauli settings with $m_2$ additional operators of the form
        \begin{equation}
            \begin{split}
                &(M_{\alpha}^2)_{1, n_1+1, 2n_1+1,\ldots} \otimes (M_\alpha^2)_{2, n_1+2, 2n_1+2,\ldots} \quad\otimes \ldots \otimes (M_\alpha^2)_{n_1, 2n_1, 3n_1,\ldots}, \quad \alpha = 1 ,\ldots, m_2,
            \end{split}
        \end{equation}
        where the indices indicate on which qubits the Pauli operators in each $M_{\alpha}^2$ should act. 
        In Fig.~\ref{app-fig:recursive} the construction is demonstrated for the case $n_1=3$ and $n_2=4$, leading to a new Pauli set for $12$ qubits.
        Recall that $X^{\otimes n_\ell}$ ($\ell=1,2$) appears in $\mathcal{M}_1$ and $\mathcal{M}_2$ respectively, thus twice in the Pauli set for $n_1n_2$ qubits. 
        We end up with a Pauli set $\mathcal{M}_{1\times2}$ for $n_1n_2$ qubits containing $m_1+m_2-1$ measurement settings. 
        Clearly, $\phi_2(n_1n_2) \leq m_1+m_2-1$.
        This holds in particular when $\mathcal{M}_1$ and $\mathcal{M}_2$ are minimal Pauli sets, we obtain
        \begin{equation}
            \phi_2(n_1n_2) \leq \phi_2(n_1)+\phi_2(n_2)-1.
            \label{eq:recursion-bound-n1n2}
        \end{equation}

        To obtain an upper bound for the scaling of $\phi_2(n)$ we set $n_1= n = \alpha^x$, $n_2 = \alpha$ and define $\xi(x) = \phi_2(\alpha^x)$, rewriting Eq.~\eqref{eq:recursion-bound-n1n2} as $\xi(x+1) \leq \phi_2(\alpha) + \xi(x) - 1$. 
        Using $\xi(1) = \phi_2(\alpha)$, we obtain $\xi(x) \leq (x - 1) \left[ \phi_2(\alpha) - 1 \right] + \phi_2(\alpha)$ by recurrence, which holds for $x \in \mathbb{N}_0$. 
        Rearranging the terms we get $\phi_2(\alpha^x) \leq x( \phi_2(\alpha) - 1) + 1$, or rather, 
        \begin{equation} \label{app-eq:myscaling}
            \phi_2(n) \leq \left[ \phi_2(\alpha) - 1 \right] \lceil \log_\alpha(n) \rceil + 1, \quad \forall \alpha \geq 2.
        \end{equation}
        In Fig.~\ref{fig:scaling}, we compare this scaling for different values of $\alpha$ for which we know $\phi_2$ exactly, to the scaling of Ref.~\cite{garciaperez2020pairwise} given in Eq.~\eqref{app-eq:garcia_scaling}.
        \begin{figure}
            \centering
            \includegraphics[width=.9\linewidth]{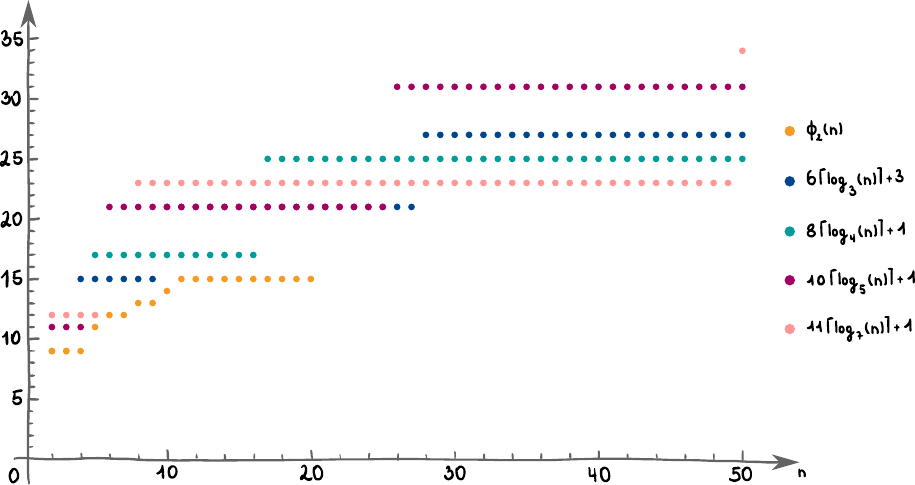}
            \caption{Number of Pauli settings as a function of number of qubits $n$, for different constructions. 
            \textcolor{orange}{Orange:} cardinality of minimal Pauli sets, $\phi_2(n)$ \cite{kokkala2020structure}. 
            These values correspond to the exact minimal number of Pauli settings needed to reconstruct all two-body marginals.
            As stressed before, they are only known up to $20$ qubits.
            \textcolor{myblue}{Blue:} construction from Ref.~\cite{garciaperez2020pairwise}, given by Eq.~\eqref{app-eq:garcia_scaling}. 
            \textcolor{myteal}{Green:} Equation \eqref{app-eq:myscaling} with $\alpha = 4$, where Eq.~\eqref{app-eq:myscaling} corresponds to the number of settings arising from the covering array construction explained in this appendix.
            \textcolor{mypurple}{Purple:} Equation \eqref{app-eq:myscaling} with $\alpha = 5$. 
            \textcolor{mypink}{Pink:} Equation \eqref{app-eq:myscaling} with $\alpha = 7$.}
            \label{fig:scaling}
        \end{figure}
        So, for example, while the construction in Ref.~\cite{garciaperez2020pairwise} is elegant and does not depend on smaller Pauli sets, it requires $6 \lceil \log_3(12) \rceil +3 = 21$ measurements to obtain all two-body marginals of a $12$-qubit system, while the recursive construction only requires $\phi_2(3)+\phi_2(4)-1=17$ settings. 
        Notice that none of the constructions are optimal, since it is known that $\phi_2(12) = 15$ \cite{kokkala2020structure}.
    
        Similar recursive constructions exist for $k>2$, for instance, it can be shown that \cite{colbourn2004combinatorial, torres2013survey}
        \begin{equation}
            \phi_3(2n) \leq \phi_3(n) + 2 \phi_2(n).
        \end{equation}
        For the explicit construction and for larger $k$s, we refer the reader to Refs.~\cite{colbourn2004combinatorial, torres2013survey} and to the references therein.
    
        Lastly, we note that significant effort has been deployed by the combinatorial designs community to obtain small covering arrays, which directly translates to Pauli sets.
        For readers interested by the smallest Pauli sets known up to date for a given number of qubits and a given $k$, we refer to the online tables in Refs.~\cite{colbourncovering, torresuniform}. 

    \subsection{Minimal Pauli sets} \label{app-sec:min_pauli_sets}
        From the covering array literature, we can obtain $k=2$ minimal Pauli sets for $n \leq 20$ \cite{kokkala2020structure} and for $k=3$, up to $n=6$ qubits \cite{colbourn2004combinatorial, torres2013survey}.
        This is summarised in the following table.
        \begin{equation}
        \begin{tabular}{|c|ccccccc|}
				\hline 
				$n$ & $4$ & $5$ & $6$ & $7$  & $9$ & $10$ & $20$ \\ \hline
				$\phi_2(n)$ & $9$ & $11$ & $12$ & $12$ & $13$ & $14$ & $15$ \\
				$\phi_3(n)$ & $27$ & $33$ & $33$ &      &      &      &      \\ \hline
			\end{tabular}
        \end{equation}
        For integers $n^\prime \in \{1,\dots,20 \}$ that are not displayed, $\phi_k(n^\prime) = \phi_k(n)$, where $n$ is the closest larger integer to $n^\prime$ displayed.
        See also Fig.~\ref{fig:scaling} for $k=2$.
        
        The minimal Pauli set used for the six-photon experimental demonstration presented in the main text was obtained through the binary program described in Eq.~\eqref{app-eq:mip}.
        The output is given by 
        \begin{equation}
            \includegraphics[scale=.35]{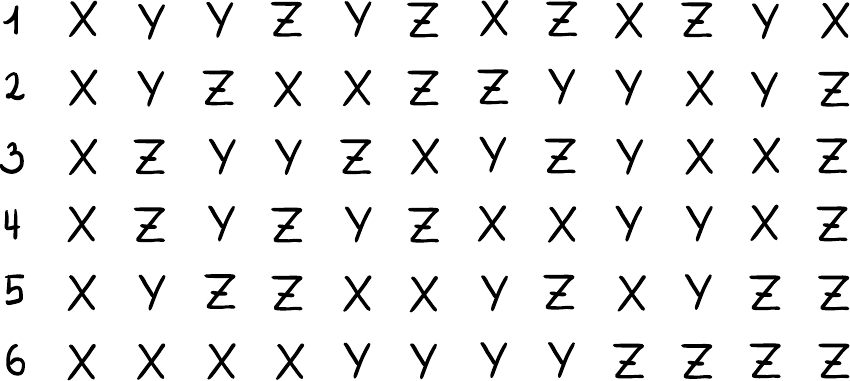}
        \end{equation}
        where the columns represent the six-qubit settings.
    
        For three-body overlapping tomography, $27$ Pauli settings are necessary.
        For three- and four-qubit systems, this is also sufficient and the minimal Pauli set obtained using Eq.~\eqref{app-eq:mip} reads
        \begin{equation}
            \includegraphics[scale=.35]{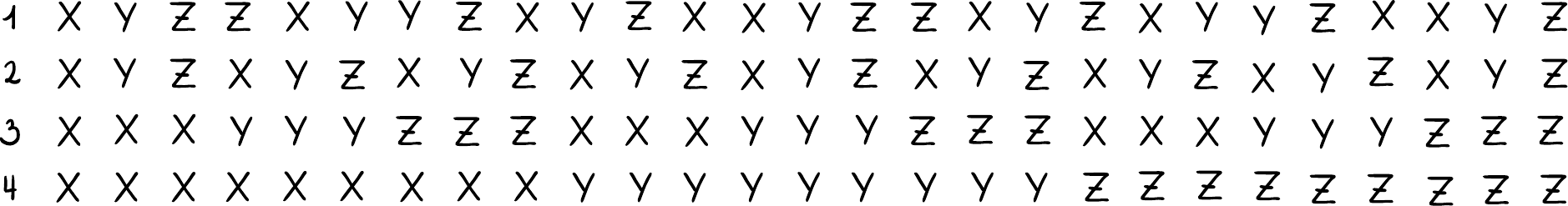}
        \end{equation}
        where four-qubit settings are given by the columns.
        Moreover, Eq.~\eqref{app-eq:mip} certifies that $\phi_3(5) = \phi_3(6) = 33$.
        We stress again that, to the best of our knowledge, the fact that this covering array is minimal for $n=5$ was not previously known.
        For $n=6$, the result can be found in Ref.~\cite{colbourn2004combinatorial}.
        The minimal Pauli set for three-body overlapping tomography of six qubits obtained through the hypergraph generalisation of Eq.~\eqref{app-eq:mip} is given by
        \begin{equation}
            \includegraphics[scale=.35]{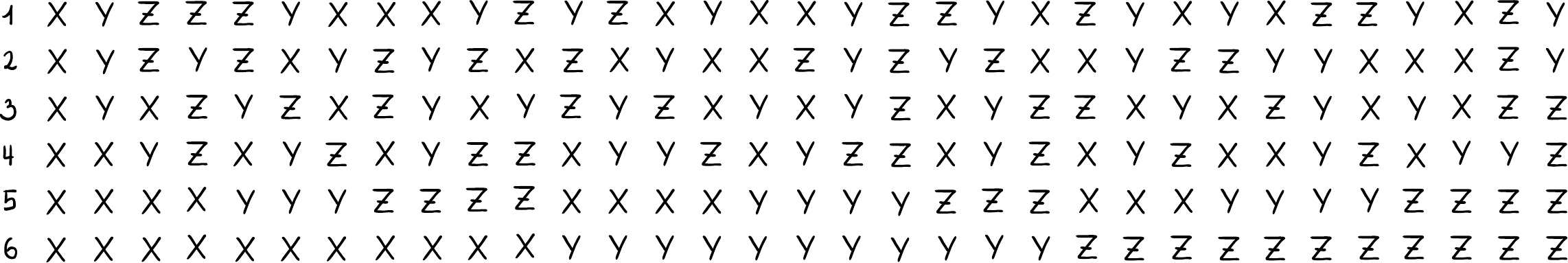}
        \end{equation}
        with columns representing the six-qubit Pauli settings.
        A minimal Pauli set for three-body overlapping tomography of a five-qubit system can straightforwardly be obtained by removing one of the rows of the six-qubit minimal Pauli set. 
    
        Finally, we present the minimal Pauli set for three-body overlapping tomography of seven qubits in the ring structure $H_7$ (see Fig.~\ref{app-fig:hypergraph_example}) obtained through the hypergraph generalisation of Eq.~\eqref{app-eq:mip},
        \begin{equation}
            \includegraphics[scale=.35]{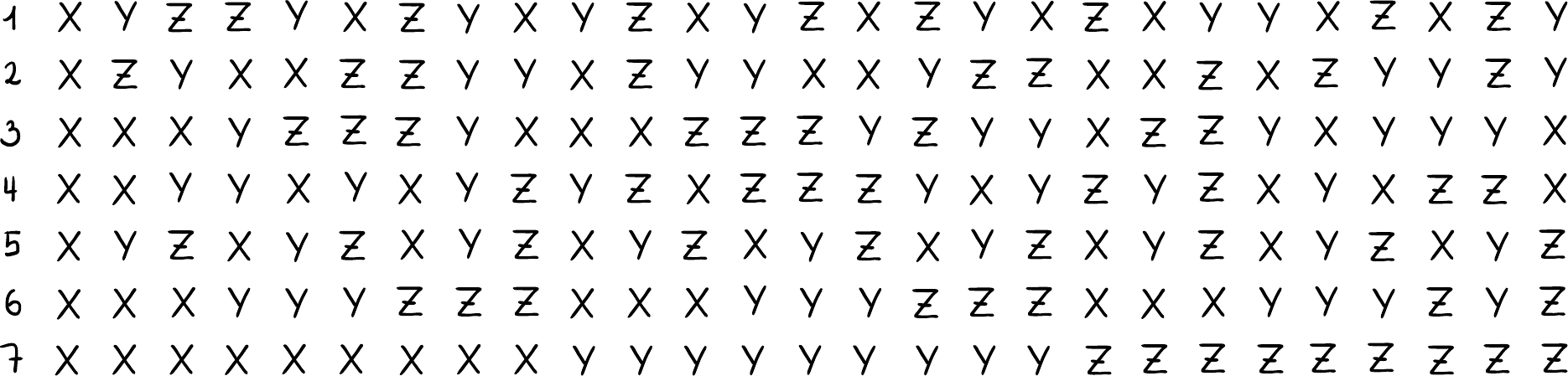}
        \end{equation}
        where the seven-qubit Pauli settings are given by the columns.

\section{Optimal tomography beyond Pauli measurements}\label{app-sec:min_settings}
    In the main text, we claim that choosing $3^k$ three-dimensional real vectors $\vec{v}_\alpha^{(i)}$ as measurement directions for every of the $n$ parties leads to tomographically complete measurement settings 
    \begin{equation}  \label{app-eq:meas_settings}
        \meas_\alpha =  \bigotimes_{i=1}^n \vec{v}_\alpha^{(i)} , \quad \alpha = 1,\dots,3^k,
    \end{equation}
    if the vectors are linearly independent, that is
    \begin{equation}
        \mathrm{span} \left( \left\{\bigotimes_{i \in \pairset} \vec{v}_\alpha^{(i)}\right\}_{\alpha =1}^{3^k}  \right) = \mathbb{R}^{3^k}
    \end{equation}
    for all $\pairset \subset \{1,\dots,n\}$ of cardinality $k$.

    Here, we show that this condition on the vectors can be fulfilled and, moreover, is fulfilled almost surely if the vectors are chosen randomly.
    This means that the vectors are independently and identically distributed (i.i.d.) with respect to the uniform distribution on the unit sphere in $\mathbb{R}^3$. 
    
    Moreover, we do not need to restrict ourselves to only the qubit case and therefore to $\mathbb{R}^3$. 
    We can consider systems of $n$ qudits of local dimension $d$ and take an orthogonal basis of the $m:= d^2-1$ dimensional space of operators instead of the Pauli operators.  
    Indeed, the following arguments hold for any combination of $k$ randomly chosen $m$-dimensional vectors $\vec{v}_\alpha^{(i)}$.
    
    To show this, we first provide a lemma that, intuitively, states that the only vector which is orthogonal to randomly chosen product vectors is the null vector.
    \begin{lemma}\label{lem:orthtens}
        Let $\vec{v}^{(i)} \in\mathbb{R}^m$ for $i=1,\dots,k$ be vectors which are i.i.d.\@ with respect to the uniform distribution on the unit sphere in $\mathbb{R}^m$, and $\vec{T}\in\mathbb{R}^{m^k}$ an arbitrary vector. 
        It holds that
        \begin{equation}\label{app-eq:orthT}
            \mathbb{P}\left[ \vec{T} \times \bigotimes_{i=1}^k \vec{v}^{(i)} = 0 \right] = 0
        \end{equation}
        for $\vec{T} \neq \vec{0}$. 
        So, the product vector $\bigotimes_{i=1}^k \vec{v}^{(i)}$ is almost surely not orthogonal to $\vec{T}$.
    \end{lemma}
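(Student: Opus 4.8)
The plan is to recognise that the contraction $\vec{T}\cdot\bigotimes_{i=1}^k \vec{v}^{(i)}$ is, once the components of the vectors are regarded as variables, a polynomial of multidegree $(1,1,\dots,1)$ in the $mk$ entries. Writing $\vec{T}$ in the standard product basis as $\vec{T} = \sum_{j_1,\dots,j_k} T_{j_1\cdots j_k}\, \vec{e}_{j_1}\otimes\cdots\otimes\vec{e}_{j_k}$, this contraction equals $\sum_{j_1,\dots,j_k} T_{j_1\cdots j_k}\, v^{(1)}_{j_1}\cdots v^{(k)}_{j_k}$, which fails to be the zero polynomial \emph{precisely because} $\vec{T}\neq\vec{0}$. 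The goal then becomes to show that the zero set of this polynomial, intersected with the product of spheres $(S^{m-1})^k$, is a null set for the product of uniform measures.

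I would prove this by induction on $k$. For the base case $k=1$ the statement reduces to: a fixed nonzero $\vec{T}\in\mathbb{R}^m$ is orthogonal to a uniformly random $\vec{v}\in S^{m-1}$ with probability zero. This holds because $\{\vec{v}\in S^{m-1} : \vec{T}\cdot\vec{v}=0\}$ is the intersection of the sphere with a hyperplane through the origin, i.e.\ a great subsphere of dimension $m-2$, which carries vanishing $(m-1)$-dimensional surface measure.

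For the inductive step, I assume the claim for $k-1$ and fix $\vec{T}\neq\vec{0}$ on $k$ factors. Slicing along the last index, I define $\vec{T}_{j}\in\mathbb{R}^{m^{k-1}}$ by $(\vec{T}_{j})_{j_1\cdots j_{k-1}} = T_{j_1\cdots j_{k-1} j}$, so that $\vec{T}\cdot\bigotimes_{i=1}^k \vec{v}^{(i)} = \sum_{j=1}^m v^{(k)}_{j}\, a_j$ with $a_j := \vec{T}_{j}\cdot\bigotimes_{i=1}^{k-1}\vec{v}^{(i)}$. Since $\vec{T}\neq\vec{0}$, at least one slice $\vec{T}_{j_\star}$ is nonzero, so the inductive hypothesis gives $a_{j_\star}\neq 0$ almost surely over $\vec{v}^{(1)},\dots,\vec{v}^{(k-1)}$; hence the random vector $\vec{a}=(a_1,\dots,a_m)$ is almost surely nonzero. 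Conditioning on the first $k-1$ vectors, $\vec{a}$ is a fixed nonzero element of $\mathbb{R}^m$ and, by independence, $\vec{v}^{(k)}$ remains uniform on $S^{m-1}$; the base case then yields $\mathbb{P}[\vec{a}\cdot\vec{v}^{(k)}=0 \mid \vec{v}^{(1)},\dots,\vec{v}^{(k-1)}]=0$ on the full-measure event $\{\vec{a}\neq\vec{0}\}$. Integrating this conditional probability over the first $k-1$ vectors completes the induction.

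The hard part will not be any estimate but the measure-theoretic bookkeeping in the inductive step: I must verify that the event $\{\vec{a}\neq\vec{0}\}$ has full measure before invoking the base case, and that the conditioning is legitimate, which follows from the mutual independence of the $\vec{v}^{(i)}$ together with the measurable (indeed polynomial) dependence of each $a_j$ on the first $k-1$ vectors, so that Fubini applies. An alternative, more compact route would invoke the general principle that a real-analytic function not identically zero on a connected real-analytic manifold vanishes only on a set of measure zero, applied to the polynomial above on the connected manifold $(S^{m-1})^k$; I prefer the induction because it is self-contained and makes transparent why $\vec{T}\neq\vec{0}$ is exactly the hypothesis the conclusion needs.
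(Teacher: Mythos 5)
Your proof is correct, and it takes a route that is recognizably parallel to, but organizationally distinct from, the paper's. Both arguments share the same measure-theoretic skeleton: integrate over one sphere at a time via Tonelli, and use the fact that a linear condition on the unit sphere with nonzero coefficient vector cuts out a null set. The difference lies in how the tensor factors are peeled off. You run an explicit induction on $k$: by slicing $\vec{T}$ along the last index you extract a nonzero slice $\vec{T}_{j_\star}$, apply the inductive hypothesis to conclude that the coefficient vector $\vec{a}=(a_1,\dots,a_m)$ is almost surely nonzero, and then finish with the base case (a hyperplane meets the sphere in a null set). The paper never slices $\vec{T}$; instead it evaluates the full iterated integral, carrying at each level a set $M_j$ of prefixes $(\vec{v}^{(1)},\dots,\vec{v}^{(j)})$ for which orthogonality holds for \emph{all} choices of the remaining $k-j$ vectors, and at each level invokes a zero-one dichotomy (a subset of the sphere defined by linear conditions has measure zero unless it is the whole sphere, since a positive-measure set cannot lie in a proper subspace and hence contains a basis); the hypothesis $\vec{T}\neq\vec{0}$ enters only at the very last step, to rule out that $M_1$ is the entire sphere. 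Your version is more modular: the role of $\vec{T}\neq\vec{0}$ is localized in the existence of a nonzero slice, no universally quantified sets are needed, and the Fubini bookkeeping you flag as the "hard part" is exactly the legitimate concern and is handled correctly by independence and measurability of the polynomial map $(\vec{v}^{(1)},\dots,\vec{v}^{(k-1)})\mapsto\vec{a}$. What the paper's formulation buys in exchange is a single uniform dichotomy applied identically at every level, at the cost of the somewhat heavier $M_j$ notation. Both arguments are equally elementary and both hold for arbitrary $m$, so they support the qudit generalization equally well; the difference is one of proof architecture rather than substance.
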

    Before we prove this lemma, we want to give a few remarks on the notation and the structure of the probability space. 
    First, we denote the unit sphere in $\mathbb{R}^m$ by $\mathcal{S}^{m-1}$ and the probability measure of the uniform distribution on the sphere by $\mu_m$. 
    The event which we consider in Eq.~\eqref{app-eq:orthT} lies in the space $\mathcal{S}^{m-1} \times \dots \times \mathcal{S}^{m-1}$, and is given by the set $M = \lbrace (\vec{v}^{(1)}, \dots, \vec{v}^{(k)}) \mid \vec{T} \times \bigotimes_{i=1}^k \vec{v}^{(i)} = 0 \rbrace$ and the probability measure is the product measure $\mathbb{P}=\mu_m \otimes \dots \otimes \mu_m$. Lastly, we denote with $\mathds{1}_A(x)$ the characteristic function of $A$. This function attains the value $1$ if and only if $x\in A$, and $0$ otherwise. 

    With these points we can now prove the lemma.
    \begin{proof}
        The main idea of the proof is to calculate the probability \eqref{app-eq:orthT} by decomposing it into $k$ integrals over the $k$ unit spheres and integrating iteratively over each single sphere. 
        To achieve this, we define for every $j$ the set $M_j :=\lbrace (\vec{v}^{(1)}, \dots, \vec{v}^{(j)}) \mid \vec{X} \times \bigotimes_{i=1}^k \vec{v}^{(i)} = 0 \ \forall \vec{v}^{(j+1)},\dots, \vec{v}^{(k)} \in \mathcal{S}^{m-1} \rbrace$. 
        This set contains exactly these combinations $(\vec{v}^{(1)}, \dots, \vec{v}^{(j)})$ such that the tensor product of these specific vectors and all possible vectors from the last $k-j$ unit spheres is orthogonal to $\vec{T}$ and therefore contained in the set $M$. 
        Note that the set $M_k = M$, and $M_j=\mathcal{S}^{m-1}$ for all $j\in\{1,\dots, k\}$ if $\vec{T}=\vec{0}$.
        We will show that in the case $\vec{T} \neq \vec{0}$ at least one of these sets is of zero measure.

        As a further prerequisite we consider for a fixed combination $(\vec{v}^{(1)}, \dots, \vec{v}^{(j-1)})$ of the first $j-1$ vectors the set $N_j = \{ \vec{v}^{(j)} \mid (\vec{v}^{(1)}, \dots, \vec{v}^{(j)}) \in M_j \}$ which contains all vectors in the $j$-th unit sphere that lead to a combination contained in $M_j$. 
        Using this definition, we can calculate the probability 
        \begin{align}
            \mu_m[N_j] 
            := \mu_m \left[ \{ \vec{v}^{(j)} \mid ( \vec{v}^{(1)}, \dots, \vec{v}^{(j)} ) \in M_j \} \right] 
            = \int_{\mathcal{S}^{m-1}} \mathds{1}_{M_j}( \vec{v}^{(1)}, \dots, \vec{v}^{(j)} )\  \mathrm{d}\mu_m(\vec{v}^{(j)}) .
        \end{align}
        We argue now that there are only two possible cases: Either $N_j$ contains the whole unit sphere $N_j= \mathcal{S}^{m-1}$, or the probability $\mu_m[N_j]$ vanishes: $\mu_m[N_j] = 0$. For this let us assume that $\mu_m[N_j] >0$. Since every proper subspace of $\mathbb{R}^m$ has measure $0$, this is possible only if a basis $\{ \vec{e}_i \}_{i=1}^m$ of $\mathbb{R}^m$ lies in $N_j$. Then every vector $\vec{v}^{(j)}$ can be decomposed in this basis and is thus also contained in $N_j$. We therefore have $N_j=\mathcal{S}^{m-1}$ and $\mu_m[N_j] =1$. We can therefore write the above probability as
        \begin{align}\label{app-eq:probj}
            \mu_m[N_j] 
            = \int_{\mathcal{S}^{m-1}} \mathds{1}_{M_j}( \vec{v}^{(1)}, \dots, \vec{v}^{(j)} )\  \mathrm{d}\mu_m(\vec{v}^{(j)}) 
            = \mathds{1}_{M_{j-1}}( \vec{v}^{(1)}, \dots, \vec{v}^{(j-1)}  ).
        \end{align}
        We can now iteratively calculate the probability in Eq.~\eqref{app-eq:orthT} using Tonelli's Theorem and Eq.~\eqref{app-eq:probj}:
        \begin{align*}
            \mathbb{P}[M]
            &= \mathbb{P}\left[ \vec{T} \times \bigotimes_{i=1}^k \vec{v}^{(i)} = 0 \right] \\
            & = \int_{\mathcal{S}^{m-1} \times \cdots \times \mathcal{S}^{m-1}} \mathds{1}_M (\vec{v}^{(1)}, \dots , \vec{v}^{(k)}) \  \mathrm{d}\mathbb{P}(\vec{v}^{(1)}, \dots , \vec{v}^{(k)}) \\
            & = \int_{\mathcal{S}^{m-1}} \cdots \int_{\mathcal{S}^{m-1}} \mathds{1}_M (\vec{v}^{(1)}, \dots , \vec{v}^{(k)})
            \  \mathrm{d}\mu_m(\vec{v}^{(k)}) \dots \  \mathrm{d}\mu_m(\vec{v}^{(1)}) & \mathrm{(Tonelli)}\\
            & = \int_{\mathcal{S}^{m-1}} \cdots \int_{\mathcal{S}^{m-1}} \mathds{1}_{M_{k-1}} (\vec{v}^{(1)}, \dots , \vec{v}^{(k-1)})
            \  \mathrm{d}\mu_m(\vec{v}^{(k-1)}) \dots \  \mathrm{d}\mu_m(\vec{v}^{(1)}) & \mathrm{(Eq.\ \eqref{app-eq:probj})} \\
            &= \dots \\
            &= \int_{\mathcal{S}^{m-1}} 
            \mathds{1}_{M_{1}} (\vec{v}^{(1)})
            \  \mathrm{d}\mu_m(\vec{v}^{(1)})  \\
            &= \mu_m [M_1] \\
            &= \mu_m [N_1].
        \end{align*}

        By the same reasoning as before, either the probability in the last line vanishes and it holds $\mathbb{P}[M] = 0$, or the set $N_1$ contains the whole unit sphere $\mathcal{S}^{m-1}$ and it holds $\mathbb{P}[M] = 1$. However, this is equivalent to  the vector $\vec{t}$ being orthogonal on every possible combination $\bigotimes_{i=1}^k \vec{v}^{(i)}$ of unit vectors. The only vector $\vec{T}$ fulfilling this is the null vector $\vec{0}$. This concludes the proof.
    \end{proof}
    Using this lemma, we can directly prove the wanted statement.
    \begin{theorem}
        Let $\vec{v}_\alpha^{(i)} \in \mathbb{R}^m$ for $i=1,\dots,k$ and $\alpha=1,\dots,m^k$ be i.i.d.\@ vectors with respect to the uniform distribution $\mu_m$ on the unit sphere $\mathcal{S}^{m-1}$ in $\mathbb{R}^m$. Then it holds almost surely that
        \begin{equation}\label{eq:spannk}
            \mathrm{span} \left( \left\{\bigotimes_{i =1}^k \vec{v}_\alpha^{(i)}\right\}_{\alpha =1}^{m^k}  \right) = \mathbb{R}^{m^k},
        \end{equation}
        i.e., the tensor products of these vectors are linearly independent and form a basis of $\mathbb{R}^{m^k}$.
    \end{theorem}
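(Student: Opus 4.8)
The plan is to establish the almost-sure linear independence of the $m^k$ product vectors $\bigotimes_{i=1}^k \vec{v}_\alpha^{(i)}$, $\alpha = 1, \dots, m^k$, by induction on the number of them included, invoking Lemma~\ref{lem:orthtens} at each step. Because there are exactly $m^k$ such vectors in the $m^k$-dimensional space $\mathbb{R}^{m^k}$, spanning is equivalent to linear independence, so this suffices for Eq.~\eqref{eq:spannk}. The base case is immediate: each $\bigotimes_{i=1}^k \vec{v}_1^{(i)}$ is a tensor product of unit vectors, hence has norm one and is nonzero, so the one-element family is linearly independent with certainty.

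For the inductive step, assume that with probability one the first $\alpha-1$ product vectors are linearly independent, with $\alpha-1 < m^k$, and let $V_{\alpha-1}$ be their (random) span. On the full-measure event where the hypothesis holds, $V_{\alpha-1}$ is a proper subspace, so its orthogonal complement contains a nonzero vector $\vec{T}$, which can be taken as a measurable function of the already-realised vectors $\vec{v}_\beta^{(i)}$, $\beta \le \alpha-1$. Since the fresh vectors $\vec{v}_\alpha^{(1)}, \dots, \vec{v}_\alpha^{(k)}$ are independent of all earlier ones, Lemma~\ref{lem:orthtens} applies conditionally with this fixed $\vec{T}$ and gives
\begin{equation}
    \mathbb{P}\!\left[\, \vec{T} \times \bigotimes_{i=1}^k \vec{v}_\alpha^{(i)} = 0 \,\Big|\, \{\vec{v}_\beta^{(i)}\}_{\beta \le \alpha-1} \right] = 0 .
\end{equation}
As membership of the new product vector in $V_{\alpha-1}$ would force it to be orthogonal to $\vec{T}$, the new vector lies outside $V_{\alpha-1}$ conditionally almost surely; averaging over the conditioning then yields that the first $\alpha$ product vectors are linearly independent almost surely. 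Iterating up to $\alpha = m^k$ produces $m^k$ almost surely linearly independent vectors, i.e.\ a basis of $\mathbb{R}^{m^k}$.

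The main obstacle I expect is the measure-theoretic bookkeeping of the conditioning: one must justify selecting $\vec{T}$ as a measurable function of the earlier vectors so that Lemma~\ref{lem:orthtens}, stated for a deterministic $\vec{T}$, can be applied conditionally, and then assemble the conditional null events into a single unconditional null event via the independence of the $\alpha$-th batch together with Tonelli's theorem. This is routine but is where care is needed. As an alternative that avoids conditioning entirely, one could observe that linear dependence occurs exactly on the zero set of the determinant of the $m^k \times m^k$ matrix whose columns are the product vectors; this determinant is a polynomial in the vector entries that is not identically zero---choosing the $\vec{v}_\alpha^{(i)}$ among the standard basis vectors so that the product vectors enumerate the standard basis of $\mathbb{R}^{m^k}$ makes it $\pm 1$---so its restriction to the product of spheres, being a nontrivial real-analytic function, vanishes only on a set of measure zero for the product of the uniform measures $\mu_m$.
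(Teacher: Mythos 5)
Your main argument is correct and runs on the same engine as the paper's proof---Lemma~\ref{lem:orthtens} applied once per new column, conditionally on the remaining columns and using independence of the fresh batch $\vec{v}_\alpha^{(1)},\dots,\vec{v}_\alpha^{(k)}$---but the implementation differs in a meaningful way. The paper works with the determinant of the matrix $M$ of product vectors and recurses on minors: Laplace expansion gives $\det(M) = \bigl(\bigotimes_{i=1}^k \vec{v}_1^{(i)}\bigr)\cdot\vec{T}_1$, where the cofactor vector $\vec{T}_1$ is an explicit polynomial in the other columns, hence automatically measurable and independent of the first column; then $\mathbb{P}[\det M = 0] = \mathbb{P}[\vec{T}_1 = \vec{0}] \leq \mathbb{P}[\det M_{1,1} = 0]$, and the recursion terminates at a deterministic nonzero vector. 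Your forward induction instead grows the span one product vector at a time and witnesses its properness by a vector $\vec{T}$ in the orthogonal complement; this is conceptually cleaner and avoids the paper's zero-padding bookkeeping for minors, but it is precisely where the measurable-selection and conditioning work you flag must be done---work the paper's cofactor vector gets for free by being a concrete function of the other columns. (Your selection is indeed routine: project the standard basis vectors onto $V_{\alpha-1}^{\perp}$ and take the first nonzero projection, a measurable choice; then integrate the conditional null events using independence.) Your alternative argument is genuinely different and bypasses Lemma~\ref{lem:orthtens} entirely: the determinant is a polynomial, not identically zero on the product of spheres (witnessed by the standard-basis configuration), and the zero set of a nontrivial real-analytic function on a connected real-analytic manifold is null. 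This is shorter but imports that fact as a black box, and needs $m \geq 2$ for connectedness of $\mathcal{S}^{m-1}$ (the case $m=1$ being trivial); it is arguably the most economical route, though the paper's self-contained probabilistic argument avoids any appeal to real-analytic geometry. Either way, the theorem follows.
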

    \begin{proof}
        Notice that Eq.~\eqref{eq:spannk} is equivalent to the fact that the matrix $M$ which contains as columns the tensor products $\bigotimes_{i =1}^k \vec{v}_\alpha^{(i)}$ has a non-vanishing determinant. 
        To prove that this condition is almost surely fulfilled, we have to show $\mathbb{P}[\det(M) = 0] =0$.

        First, we decompose the determinant of $M$ into two independent parts by applying the Laplace expansion along the first column.  
        Reordering the resulting terms leads to 
        \begin{equation}
            \det(M) = \left( \bigotimes_{i =1}^k \vec{v}_1^{(i)} \right) \times \vec{T_1},
        \end{equation}
        where the vector $\vec{T_1}$ fulfils $(\vec{T_1})_i = \pm \det(M_{i,1})$, and the $M_{i,j}$ denote the submatrices from $M$ resulting from removing the $i$th row and the $j$th column. 
        This decomposition also follows from the fact that the determinant is linear in the first column. 
        It is important to note that the vector $\vec{T_1}$ does not depend on the vectors $\vec{v}_1^{(i)}$, $i=1,\dots k$, so the two parts in the above product are statistically independent. 

        One of the entries in $\vec{T_1}$ is the determinant $\det(M_{1,1})$, where the first row and column were removed from $M$. This determinant can again be written as 
        \begin{equation}
            \det(M_{1,1}) = \left( \bigotimes_{i =1}^k \vec{v}_2^{(i)} \right) \times \vec{T_2}.
        \end{equation}
        Here, $\vec{T_2}$ contains as first entry a $0$ and is independent from the vector $\bigotimes_{i =1}^k \vec{v}_2^{(i)}$. 
        Again, one entry is given by the determinant of the submatrix $M_{\{1,2\},\{1,2\}}$, where the first and second rows and columns were removed. 
        Iterating this scheme we can always decompose the determinant $M_{\{1,...,m\},\{1,...,m\}}$ into two independent parts $\left( \bigotimes_{i =1}^k \vec{v}_{m+1}^{(i)} \right)$ and $\vec{T}_{m+1}$, where one entry of the vector $\vec{T}_{m+1}$ is given by $M_{\{1,...,m+1\},\{1,...,m+1\}}$ for $m=0,...,n-1$. 
        Also, the number of entries in $\vec{T}_{m}$ which are $0$ increases in every step by $1$. 
        In the last step we obtain the vector $\vec{T}_{n}$ where only one single entry equals $1$ and otherwise the entries are $0$.

        We can now directly calculate the probability $\mathbb{P}[\det(M)=0]$ using this decomposition as
        \begin{equation}
            \mathbb{P}[\det(M)=0] = \mathbb{P}[ \left( \bigotimes_{i =1}^k \vec{v}_1^{(i)} \right) \times \vec{T_1} = 0].
        \end{equation}
        From Lemma \ref{lem:orthtens}, we can conclude that the right hand side vanishes if and only if $\vec{T_1}\neq\vec{0}$. 
        We therefore obtain 
        \begin{equation}
            \mathbb{P}[\det(M)=0] = \mathbb{P}[\vec{T_1} = \vec{0}].
        \end{equation}
        The right hand side is clearly upper bounded by the probability that only one special entry of $\vec{T_1}$, namely $\det(M_{1,1})$, equals zero. Repeating the above argument and using the iterative scheme described before, we get
        \begin{equation}
        \begin{split}
            0 &\leq \mathbb{P}[\det(M)=0] = \mathbb{P}[\vec{T_1} = \vec{0}] \\
            &\leq \mathbb{P}[\det(M_{1,1}) = 0] = \mathbb{P}[\vec{T_2} = \vec{0}] \\
            &\leq \dots  \leq \mathbb{P}[\vec{T_n} = \vec{0}].
        \end{split}
        \end{equation}
        Since $\vec{T_n}$ is a fixed vector with one entry equals $1$, the probability that it is equal to $\vec{0}$ vanishes and therefore it holds $\mathbb{P}[\det(M)=0] = 0$, which concludes the proof.
    \end{proof}

\section{Confidence regions and numerical optimisation} \label{app-sec:conf_regs_and_num_opt}
    The goal of quantum state tomography is to obtain an estimate $\hat{\varrho}$ for an unknown density operator $\varrho$. Typically, measurement data is obtained from independent samples of $\varrho$ which are measured using a tomographically complete set of measurements.
    Since only a finite number of samples $N$ can be measured, the estimate $\hat{\varrho}$ invariably differs from the true state $\varrho$. Therefore, to have a trustable estimate, it is important to give rigorous guarantees on the maximum distance between $\hat{\varrho}$ and $\varrho$. This can be done by means of confidence regions (or credible regions, if a Bayesian approach is used).
    
    In Ref.~\cite{degois2023userfriendly}, it is shown that with $N$ samples of the state $\varrho$ one can reconstruct a Hermitian operator $\hat{\varrho} = M^+ \vec{f}$, where $\vec{f}$ is the measurement data and $M^+$ a linear map related to the measurements (see below), such that
    \begin{equation} \label{app-eq:conf_reg}
        \Pr[\norm{\hat{\varrho} - \varrho}_2 < \eps \sigma] \geq 1 - \delta.
    \end{equation}
    Here, $1 - \delta \in [0,1]$ is the confidence level, $\eps = 3\sqrt{u}(\sqrt u+\sqrt{u+1})$, with $u = \nicefrac{2}{9N}\log(8/\delta)$, the distance $\norm{\cdot}_2$ is measured in the Hilbert-Schmidt norm, and the parameter $\sigma$ is a function of the measurement settings and related to the variance in the measurement results. 
    The resulting region $\norm{\hat{\varrho} - \varrho}_2 < \eps \sigma$ is a sphere in the space of Hermitian operators around $\hat{\varrho}$ and is such that $\varrho$ is inside it with probability at least $1 - \delta$.

    Notice that $\sigma$ is the only parameter associated with the measurement settings, and that ideally we want it to be as small as possible. It can be computed in the following way. Suppose we have a set of observables $\{\meas_\alpha\}_\alpha$ that we want to measure.
    Each $\meas_\alpha$ can be decomposed into its measurement effects $\{\Pi_\alpha^o\}_o$, with $\{o\}$ labelling the possible measurement outcomes of $\meas_\alpha$.
    When a state $\varrho$ is measured according to $\mathcal{M}_\alpha$, outcome $o$ occurs with probability $\tilde{p}_{o \mid \alpha}= \tr\left(\Pi^o_\alpha \varrho\right)$, and we suppose that $\mathcal{M}_\alpha$ is measured a total of $N_\alpha$ times.
    We rescale the projectors and collect them into a single positive operator-valued measure (POVM), 
    \begin{equation}
        \Pi = \left\{ \frac{N_\alpha}{N} \ \Pi^o_\alpha \right\}_{o,\alpha}, \quad \sum_\alpha N_\alpha = N, 
    \end{equation}
    and label the measurement outcome associated to the effect $\frac{N_\alpha}{N} \ \Pi^o_\alpha$ by $(o,\alpha)$. The outcome $(o,\alpha)$ is then obtained with probability
    \begin{equation}
        p_{(o, \alpha)} = \tr\left( \frac{N_\alpha}{N} \ \Pi^o_\alpha \varrho \right) = \frac{N_\alpha}{N} \tilde{p}_{o \mid \alpha} ,
    \end{equation}
    which we collect into a single vector $\vec{p}$. For our purposes, which is to get a quantifier of the quality of a measurement set, we assume that performing each projective measurement $\meas_\alpha$ a certain number $N_\alpha$ of times is equivalent to performing the single generalised measurement $\Pi$.

    We now implement the tomographic experiment: We perform $N$ times the POVM $\Pi$, and obtain the outcome $(o,\alpha)$ a certain number $N_{(o,\alpha)}$ of times. We build the frequency vector $\vec{f}$ with entries 
    \begin{equation}
        f_{(o,\alpha)} = \frac{N_{(o,\alpha)}}{N}.
    \end{equation}
    Of course, 
    \begin{equation}
        \lim_{N \rightarrow \infty} \frac{N_{(o,\alpha)}}{N} = p_{(o,\alpha)}.
    \end{equation}
    This procedure can be summarised by the measurement map 
    \begin{equation} \label{app-eq:meas_map}
        M: \mathbb{C}^{d^2} \rightarrow \mathbb{R}^{\abs{\Pi}} : \varrho \mapsto M \varrho  = \vec{p},
    \end{equation}
    with $\abs{\Pi}$ being the number of possible outcomes. To represent $M$ as a matrix, we can take its rows to be vectorisations of the elements of $\Pi$.
    Since we assume the measurements to be tomographically complete, then $M$ has a left-inverse (which we take to be the pseudo-inverse $M^+$) such that $M^+ \vec{p} = \varrho$ and $\hat{\varrho} = M^+ \vec{f}$. 
    From this map, we finally compute $\sigma$ as
    \begin{equation} \label{app-eq:sigma}
        \sigma = \max_k \norm{M^+_k}, 
    \end{equation}
    where $M_k^+$ is the $k$th column vector of $M^+$ (see Ref.~\cite{degois2023userfriendly} for details).

    In the following, we use $\sigma$ as a figure of merit to optimise the measurement directions for the six qubits overlapping tomography experiment. Thus, we aim at finding $9 \times 6 = 54$ local measurement directions that lead to small confidence regions for each pair of qubits, that is, to small $\sigma_\pairset$, for all $\pairset \subset \{1,\dots,6\}$ with $\abs{\pairset} = 2$.
    We focus our attention on $\max_\pairset \sigma_\pairset = \sigma_{\max}$, such that we can phrase Eq.~\eqref{app-eq:conf_reg} as
    \begin{equation} \label{app-eq:conf_reg_pairs}
        \Pr[\norm{\hat{\varrho}_\pairset - \varrho_\pairset} \leq \eps \sigma_{\max}] \geq 1 - \delta
    \end{equation}
    for all pairs of qubits $\pairset$.

    In the following, we show that $\sigma_\pairset$ is related to the volume spanned by the measurement directions for the qubit pair $\pairset$.
    Recall that for the pair of qubits $\pairset = \{i,j\}$, the nine measurement directions are given by $\vec{v}_\alpha^{(i)} \otimes \vec{v}_\alpha^{(j)}$, $\alpha \in \{1,\dots,9\}$.
    We decompose the measurement map $M_\pairset$ of Eq.~\eqref{app-eq:meas_map} as $M_\pairset = {A}_\pairset B$.
    The matrix $B$ is such that its $i$th row, $i \in \{1,\dots,16\}$ is given by the vectorisation of the $i$th Pauli operator ordered as $(\eye \eye, X\eye, Y\eye, Z\eye, \eye X, \eye Y, \eye Z, XX, XY, XZ, YX, YY, YZ, ZX, ZY, ZZ)$.
    The matrix ${A}_\pairset$ is given by the $36 \times 16$ real matrix
    \begin{equation}
        A_\pairset = \frac{1}{36}
            \begin{pmatrix}
                1 & \vec{v}_1^{(i)} & \vec{v}_1^{(j)} & \vec{v}_1^{(i)} \otimes \vec{v}_1^{(j)} \\
                1 & \vec{v}_1^{(i)} & -\vec{v}_1^{(j)} & -\vec{v}_1^{(i)} \otimes \vec{v}_1^{(j)} \\
                1 & -\vec{v}_1^{(i)} & \vec{v}_1^{(j)} &- \vec{v}_1^{(i)} \otimes \vec{v}_1^{(j)} \\
                1 & -\vec{v}_1^{(i)} &- \vec{v}_1^{(j)} & \vec{v}_1^{(i)} \otimes \vec{v}_1^{(j)} \\
                1 & \vec{v}_2^{(i)} & \vec{v}_2^{(j)} & \vec{v}_2^{(i)} \otimes \vec{v}_2^{(j)} \\
                \vdots & \vdots & \vdots & \vdots \\
                1 & -\vec{v}_9^{(i)} & -\vec{v}_9^{(j)} & \vec{v}_9^{(i)} \otimes \vec{v}_9^{(j)} 
            \end{pmatrix}.
    \end{equation}
    We define the $16 \times 16$ matrix {$X_\pairset = M_\pairset^\dagger M_\pairset =  B^\dagger A_\pairset^T A_\pairset B$} and compute
    \begin{equation}
        A_\pairset^T A_\pairset = \frac{4}{36^2} 
            \begin{pmatrix}
                9 & 0 & 0 & 0 \\
                0 & \sum_{\alpha = 1}^9 \vec{v}_\alpha^{(i)}  (\vec{v}_\alpha^{(i)})^T & 0 & 0 \\
                0 & 0 & \sum_{\alpha = 1}^9 \vec{v}_\alpha^{(j)}  (\vec{v}_\alpha^{(j)})^T & 0 \\
                0 & 0 & 0 & \sum_{\alpha = 1}^9 (\vec{v}_\alpha^{(i)} \otimes \vec{v}_\alpha^{(j)})  (\vec{v}_\alpha^{(i)} \otimes \vec{v}_\alpha^{(j)})^T
            \end{pmatrix}.
    \end{equation}
    On the other hand, we can express $\sigma_\pairset$ as 
    \begin{equation}
        \sigma_\pairset = \max_{\vec{e}} \norm{M_\pairset^+ \vec{e}} = \max_{\vec{e}} \sqrt{\vec{e}^T (M_\pairset^+)^\dagger M_\pairset^+ \vec{e}},
    \end{equation}
    where the maximum is taken over vectors $\vec{e}$ from the standard basis, and where the norm is the euclidean norm. 
    The pseudoinverse is chosen to be the Moore-Penrose inverse, and from the singular value decomposition of the measurement map $M_\pairset = \sum_{i=1}^{16} \mu_i \vec{u}_i \vec{w}_i^\dagger$, we can write $M_\pairset^+ = \sum_{i=1}^{16} \nicefrac{1}{\mu_i} \vec{w}_i \vec{u}_i^\dagger$, such that 
    \begin{equation} \label{app-eq:sigma_ineq}
        \sigma_\pairset \leq \sqrt{\sum_{i=1}^{16} \frac{1}{\mu_i^2}} = \sqrt{\sum_{i=1}^{16} \frac{1}{\nu_i}},
    \end{equation}
    where $\nu_i$, with $i \in \{1,\dots,16\}$, are the eigenvalues of $X_\pairset$.
    We denote the last $9 \times 9$ block of $A_\pairset^T A_\pairset$ by $Y_\pairset$, that is,
    \begin{equation}
        Y_\pairset = \sum_{\alpha = 1}^9 (\vec{v}_\alpha^{(i)} \otimes \vec{v}_\alpha^{(j)}) (\vec{v}_\alpha^{(i)} \otimes \vec{v}_\alpha^{(j)})^T,
    \end{equation}
    and its eigenvalues are non-negative numbers $\lambda_i$, $i \in \{9\}$. The matrix 
    \begin{equation}
        Z_\pairset = (v_1^{(i)} \otimes v_1^{(j)}, v_2^{(i)} \otimes v_2^{(j)}, \dots, v_9^{(i)} \otimes v_9^{(j)})
    \end{equation}
    is such that $Y_\pairset = Z_\pairset Z_\pairset^T$, and therefore $Z_\pairset$ has singular values $\sqrt{\lambda_i}$, $i \in \{1,\dots,9\}$.
    Finally, we can write the determinant of $Z_\pairset$ as $\abs{\det(Z_\pairset)} = \prod_{i=1}^9 \sqrt{\lambda_i}$.
    By denoting the second and third block of $A_\pairset$ by $Y_i$ and $Y_j$ respectively, we can write $\det(X_\pairset) = \nicefrac{1}{36} \det(BB^\dagger) \det(Y_iY_j) \det(Z_\pairset)^2$.
    Due to Eq.~\eqref{app-eq:sigma_ineq}, we can expect that large $\abs{\det(Z_\pairset)}$ lead to small $\sigma_\pairset$.
    We note that $\abs{\det(Z_\pairset)}$ can be interpreted as the volume of the parallelotope spanned by the column vectors of $Z_\pairset$, and is maximal for columns that form an orthonormal basis.

    Taking a (product) orthonormal basis as measurement directions for the pair $\pairset = \{i,j\}$ directly leads to the maximum of $\abs{\det(Z_\pairset)} = 1$.
    One obvious choice is to take the standard basis, resulting in $Z_\pairset$ to be the nine-dimensional identity matrix.
    However, since the number of Pauli settings needed for two-body overlapping tomography of a six-qubit system is given by $\phi_2(6) = 12 > 9$, it is not possible to find $54$ local measurement directions $\vec{v}_\alpha^{(i)}$, with $\alpha \in \{1,\dots,9\}$ and $i \in \{1,\dots,6\}$, such that for every pair of qubits $\pairset = \{i,j\}$, the nine vectors $\vec{v}_\alpha^{(i)} \otimes \vec{v}_\alpha^{(j)}$, with $\alpha \in \{1,\dots,9\}$, form the standard basis.
    As a consequence, our goal is to find 54 local measurement directions $\vec{v}_\alpha^{(i)}$ ($\alpha \in \{1,\dots,9\}$ and $i \in \{1,\dots,6\}$) such that for each of the $15$ pairs $\pairset$, $\abs{\det(Z_\pairset)}$ is large.
    As a direct maximisation of $\min_\pairset \abs{\det(Z_\pairset)}$ is not easy, we turn our attention to an objective function of the type
    \begin{equation} \label{app-eq:obj_fct}
        f\left(\{\vec{v}_\alpha^{(i)}\}_{\alpha, i= 1}^{9, 6} \right) = w_1 \sum_\pairset \abs{\det(Z_\pairset)} - w_2 \sum_\pairset \abs{\det(Z_\pairset)}^2
    \end{equation}
    with weights $w_1, w_2 \geq 0$ and $w_1^2+w_2^2=1$.
    This is inspired by modern portfolio theory, or mean-variance analysis, which is a framework for assembling a collection of investments such that the expected return is maximised for a given level of risk in finance \cite{markowits1952portfolio}.
    The theory has initially been introduced by Harry Markowitz, for which he was eventually awarded the Nobel Memorial Prize in Economic Sciences in 1990.
    This allows us to find directions that will lead to large $\abs{\det(Z_\pairset)}$, while keeping the standard deviation of all $\abs{\det(Z_\pairset)}$ small.
    When the standard deviation is zero, clearly all $\abs{\det(Z_\pairset)}$ are equal.
    
    Using the Broyden-Fletcher-Goldfarb-Shanno (BFGS) algorithm, we maximised the objective function Eq.~\eqref{app-eq:obj_fct} for different weights, and found that for $w_2 \lesssim \cos(\pi/5)$, the achieved $\abs{\det(Z_\pairset)}$ are often equal for all pairs $\pairset$ (see Fig.~\ref{fig:modern_port_plot}).
    Using this approach, we were able to find $54$ measurement directions with $\sigma_{\max} \simeq 7.65$.
    We discuss in the next section how this reflects on the number of samples.
    \begin{figure}
        \centering
        \includegraphics[width=.75 \linewidth]{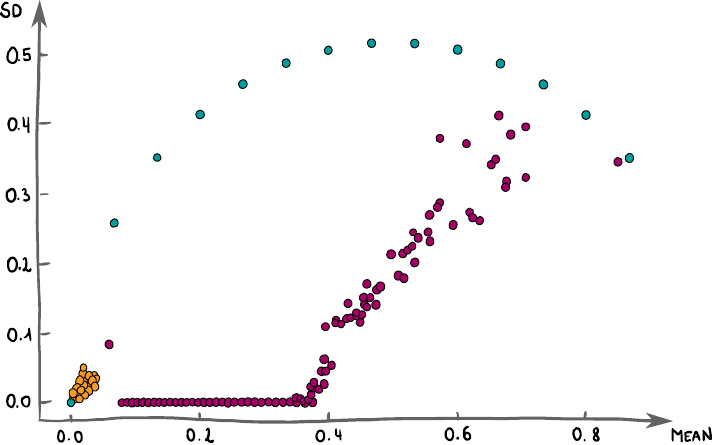}
        \caption{%
        Each point corresponds to one set of measurement directions, with its $x$-coordinate being the mean $\nicefrac{1}{15}\sum_\pairset \abs{\det(Z_\pairset)}$, and its $y$-coordinate being the standard deviation. 
        We aim to find measurement directions for which the $\abs{\det(Z_\pairset)}$ are equal for all pairs (i.e., standard deviation equal to zero) but also as large as possible.
        The \textcolor{orange}{orange} points correspond to randomly chosen vectors, the \textcolor{mypurple}{purple} points correspond to optimised directions with different weights (see Eq.~\eqref{app-eq:obj_fct}), and the \textcolor{myteal}{green} points correspond to cases where the vectors all are from the standard basis (when vectors are from the standard basis, the determinants of $Z_\pairset$ are either zero or one and correspond to Pauli settings).
        }
        \label{fig:modern_port_plot}
    \end{figure}

    Further, we may force the local directions to form three orthonormal bases, that is, for qubit $i$ we force the nine vectors $\vec{v}_\alpha^{(i)}$ ($\alpha \in \{1,\dots,9\}$) to be partitioned into three orthonormal bases.
    Doing this, we were able to obtain measurement directions with $\sigma_{\max} \simeq 7.78$. 
    The single-qubit $i \in \{1,\dots,6\}$ directions are partitioned in orthonormal bases following 
    \begin{equation*}
        \begin{tabular}{cccc}
            \{\{1, 2, 3\}, & \{4, 5, 6\}, & \{7, 8, 9\}\}, & \\
            \{\{1, 2, 5\}, & \{3, 4, 7\}, & \{6, 8, 9\}\}, & \\
            \{\{1, 2, 8\}, & \{3, 6, 7\}, & \{4, 5, 9\}\}, & \\
            \{\{1, 3, 4\}, & \{2, 8, 9\}, & \{5, 6, 7\}\}, & \\
            \{\{1, 5, 9\}, & \{2, 4, 7\}, & \{3, 6, 8\}\} & \text{and} \\
            \{\{1, 6, 7\}, & \{2, 4, 9\}, & \{3, 5, 8\}\} &
        \end{tabular}
    \end{equation*}
    respectively.
    The measurement directions are presented in Table \ref{app-tab:meas_dir}.
    \begin{table}
        \makebox[\textwidth][c]{
        \begin{tabular}{|c|c|c|c|c|c|}
            \hline
            Qubit 1  & Qubit 2  & Qubit 3  & Qubit 4  & Qubit 5 & Qubit 6 \\ \hline
            $[1.34851, -1.7187]$  & $[0.74451, 1.85896]$  & $[2.81234, -1.66384]$ & $[1.22444, -2.24737]$ & $[2.62025, -1.56922]$ & $[1.61654, 2.41608]$  \\
            $[1.62452, -0.16006]$ & $[0.83181, -1.13389]$ & $[1.24291, -1.56911]$ & $[1.86266, 1.89939]$  & $[0.78386, 2.4564]$   & $[0.32988, 0.14995]$  \\
            $[0.2289, 1.17782]$   & $[0.83405, -0.17509]$ & $[2.33714, 0.27682]$  & $[0.74332, -0.27308]$ & $[2.61964, -1.73379]$ & $[1.32478, -1.55164]$ \\
            $[0.88628, 0.06155]$  & $[0.98653, -2.38924]$ & $[2.63105, -1.5318]$  & $[0.94539, 2.20137]$  & $[1.04552, 0.26519]$  & $[1.65486, 1.47209]$  \\
            $[0.9695, 2.22663]$   & $[1.64509, 0.36903]$  & $[1.06042, -1.5596]$  & $[1.70326, -2.5176]$  & $[1.60308, 3.08691]$  & $[2.42079, -0.27072]$ \\
            $[1.01301, -2.04348]$ & $[1.83028, 0.04163]$  & $[2.21489, 2.65553]$  & $[2.12737, 2.11179]$  & $[1.05058, -1.644]$   & $[1.56836, -2.29642]$ \\
            $[2.70374, 0.28677]$  & $[2.08781, -1.20394]$ & $[1.16136, 1.41667]$  & $[2.56575, -0.74011]$ & $[2.09094, 1.49761]$  & $[0.04581, 2.36286]$  \\
            $[1.69042, -1.54368]$ & $[1.55645, -1.52536]$ & $[1.54184, 3.13342]$  & $[2.56242, -2.33567]$ & $[1.532, 3.04614]$    & $[0.91042, 0.21545]$  \\
            $[1.9898, 3.11515]$   & $[2.88169, -3.04215]$ & $[1.58265, -3.12376]$ & $[1.08649, -2.97173]$ & $[2.09094, 1.49761]$  & $[1.2526, 3.01513]$  \\ \hline
        \end{tabular}
        }
        \caption{%
        Measurement directions used for the six-qubit experimental implementation. 
        Each entry is a pair of Bloch vector angles $[\theta, \phi]$ in radians. 
        Each line corresponds to one six-qubit measurement setting $\mathcal{M}_\alpha$ as defined in Eq.~\eqref{app-eq:meas_settings}, with $\alpha = 1, \dots, 9$.}
        \label{app-tab:meas_dir}
    \end{table}

   Lastly, we discuss a possible analytical ansatz for the measurement directions, which relies on the structure of the $\mathbb{R}^9$. 
   This vector space contains a six-dimensional symmetric subspace, so one possible starting point could be to choose the first six measurement directions $\vec{v}_\alpha^{(i)} = \vec{v}_\alpha$ to be equal for every qubit.
   The set $\{\vec{v}_\alpha \otimes \vec{v}_\alpha\}_{\alpha = 1}^6$ thus spans the symmetric subspace and all parties perform the same local measurement at the same time.
   Then, it is left to find another three measurement directions for every party such that, together with the symmetric vectors, these $\vec{v}_\alpha^{(i)} \otimes \vec{v}_\alpha^{(j)}$ span the whole space $\mathbb{R}^9$ for every possible choice of $i$ and $j$.  
   However, the directions obtained using this ansatz together with a numerical optimisation over the remaining vectors do not lead to confidence regions as small as the unrestrained optimisation over all nine measurement directions.

\section{Discussion on the number of samples} \label{app-sec:nbr_samples}
    In this appendix, we compare our different measurement settings for two-body overlapping tomography of six qubits.
    For this, we want to compare the total number of samples $N$ required to reach a certain level of confidence.
    We recall the confidence region equations (Eq.~\eqref{app-eq:conf_reg_pairs}) from Ref.~\cite{degois2023userfriendly}
    \begin{equation}\label{eq:cr-sigma-max} 
        \Pr[\norm{\hat{\varrho}_\pairset - \varrho_\pairset} \leq \eps \sigma_{\max}] \geq 1 - \delta, \quad \forall \pairset.
    \end{equation}
    Therein, $\sigma_{\max}$ depends on the measurement settings, and $\eps = 3\sqrt{u}(\sqrt u+\sqrt{u+1})$, with $u = 2\log(8/\delta)/9 N$.

    When quantum state tomography of a two-qubit state is performed with the nine two-body Pauli settings, the uniform $\sigma$ is equal to five. 
    We denote this by $\sigma_{\mathrm{Pauli}} = 5$. 
    The $12$ minimal Pauli settings obtained through Eq.~\eqref{app-eq:mip} (see Sec.~\ref{app-sec:min_pauli_sets}) lead to a $\sigma_{\max} = 6.52 = \sigma_{\mathrm{bin. prog.}}$.
    The measurement settings optimised in the previous section achieve $\sigma_{\max} = 7.65 = \sigma_{\mathrm{unres. opt.}}$ and $\sigma_{\max} = 7.78 = \sigma_{\mathrm{orth. opt.}}$ in the case of unrestricted optimisation and orthonormal bases optimisation respectively.
    Finally, we want to compare with the 21 settings from Refs.~\cite{cotler2020quantum, bonetmonroig2020nearly, yang2023experimental}, which have $\sigma_{\max} = 10.7 =\sigma_{\mathrm{lit.}}$.

    We fix the radius $\eps \sigma_{\max}$ to 0.1, and report in the following table how many more samples are needed when comparing a scheme with a larger $\sigma_{\max}$ to one with a smaller $\sigma_{\max}$.
    \begin{equation} \label{tab:sigmas}
        \begin{tabular}{|c|c|c|c|c|c|}
            \hline
             & $\sigma_{\mathrm{Pauli}}$ & $\sigma_{\mathrm{bin. prog.}}$ & $\sigma_{\mathrm{unres. opt.}}$ & $\sigma_{\mathrm{orth. opt.}}$ & $\sigma_{\mathrm{lit.}}$ \\ \hline
            $\sigma_{\mathrm{Pauli}}$ & $\cdot$ & $70\%$ & $130\%$ & $140\%$ & $360\%$ \\
            $\sigma_{\mathrm{bin. prog.}}$ & $70\%$ & $\cdot$ & $38\%$ & $42\%$  & $170\%$ \\
            $\sigma_{\mathrm{unres. opt.}}$ & $130\%$ & $38\%$ & $\cdot$ & $3.4\%$ &  $95\%$ \\
            $\sigma_{\mathrm{orth. opt.}}$ & $140\%$ & $42\%$ & $3.4\%$ & $\cdot$ & $89\%$ \\
            $\sigma_{\mathrm{lit.}}$ & $360\%$ & $170\%$ & $95\%$ & $89\%$ & $\cdot$ \\
            \hline
        \end{tabular}
    \end{equation}
    First, we directly notice that the construction from Refs.~\cite{cotler2020quantum, bonetmonroig2020nearly} needs $170\%$ more samples when compared to the minimal Pauli set to achieve the same confidence level. 
    Second, Table \eqref{tab:sigmas} shows that there is little difference between the unrestricted optimised settings and the settings partitioned in three orthonormal bases per qubit ($3.4\%$ more for the settings partitioned in bases).
    Finally, it shows that requiring the minimal number of measurement settings (i.e., nine) for two-body overlapping tomography of six qubits comes at a cost of more measurement samples (of the order of $40\%$ more) to reach the same confidence level than the optimal Pauli settings (which require $12$ measurement settings).

    For the experimental demonstration of overlapping tomography discussed in the main text, we have used the minimal Pauli settings (leading to $\sigma_{\text{bin.prog}}$) and the optimised non-Pauli settings shown in Table \ref{app-tab:meas_dir} above (resulting in $\sigma_{\text{unres.opt}}$). A total of $9437$ ($8088$) counts were collected for the minimal Pauli (non-Pauli) case. From the data, it is possible to reconstruct the marginals $\hat{\varrho}_{\mathcal{S}}$ by simply applying the inverse measurement maps $M^+_{\mathcal{S}}$ to the frequencies of the counts $\vec{f}$ (Appendix~\ref{app-sec:conf_regs_and_num_opt}). The confidence regions would then ensue directly from Eq.~\eqref{eq:cr-sigma-max}, and guarantee that for any of the marginals, the true state $\varrho_{\mathcal{S}}$ is inside the ball $\norm{\hat{\varrho}_{\mathcal{S}} - \varrho_{\mathcal{S}}}_2 < \tilde{\epsilon}$ with high probability (say, $1\sigma \approx 0.682$) where $\tilde{\epsilon} \approx 0.17$ ($0.22$) for the minimal Pauli (non-Pauli) settings. Notice that the estimates $\hat{\varrho}_{\mathcal{S}}$ obtained in this way will naturally differ from the estimates obtained through the maximum likelihood estimator discussed in the main text and in Appendix~\ref{app-sec:exp}. Our choice of providing a detailed analysis of the latter is to facilitate comparison to previous experimental results due to it being a common choice in the literature. 
    
    As a closing remark, we note that it is always possible to find measurement settings such that for any pair $\{i,j\}$ of qubits, the nine two-body Pauli settings all appear exactly the same amount of times. 
    These Pauli sets correspond to \emph{orthogonal arrays} with three symbols (see Ref.~\cite{hedayat1999orthogonal} for an introduction to the topic and see Ref.~\cite{sloanetables} for tables of orthogonal arrays).
    For instance, for up to $n=7$ qubits, it is possible to find eighteen measurement settings such that for every pair of qubits, the nine two-body Pauli settings all appear twice. 
    For up to $13$ qubits, there exist $27$ measurement settings such that for every pair of qubits, the nine two-body Pauli settings all appear thrice. 
    This ensures that $\sigma$ is constant for all the pairs, and equal to five \cite{degois2023userfriendly}, but this approach clearly does not lead to minimal Pauli sets.
    Interestingly, making use of such repetitions in the Pauli settings can lead to a quadratic improvement in the number of samples, as discussed in Ref.~\cite{veltheim2024multiset}.
    Therein, the authors also present a greedy algorithm to obtain Pauli settings with given numbers of repetitions for each $k$-body Pauli operator, which achieves an asymptotic quadratic improvement for the cases they consider. 
    Finally, we note that taking all the $3^n$ Pauli settings leads to tomographically complete measurements for all $k$ subsets of qubits for any $k \leq n$, where each $k$-body Pauli appears exactly $3^{n-k}$ times.
    Thus, if the number of settings is not a major obstacle in the experimental implementation, such a Pauli set could in principle be considered such that for large $n$ and small $k$, measuring each setting only a few times would be enough samples.

\section{Detailed experimental results} \label{app-sec:exp}

    \subsection{Characteristic feature of experimental 6-qubit Dicke state} \label{app-sec:charact_dicke}
        As full state tomography is experimentally prohibitive, in this part we complement the experimental results of the measurements on the $Z$, $X$ and $Y$ bases to characterise the six-qubit experimental state. 
        In each measurements part (see
        {Fig.~3 of the main text}
        ), we projectively measure the polarisation of the photons either along $\ket{H}/ \ket{V}$, $\ket{D}/\ket{A}= \nicefrac{1}{\sqrt{2}}(|H\rangle\pm|V\rangle)$ or $\ket{L}/\ket{R}=\nicefrac{1}{\sqrt{2}}(|H\rangle\pm i|V\rangle)$, which are the eigenvectors of Pauli settings $Z$, $X$ and $Y$, respectively.
        
        The experimental results are presented in Fig.~\ref{fig:xyzbases}. The blue bars denote the normalised experimental probabilities and the pale grey bars denote the theoretical predictions of the ideal state $\ket{D_{6}^{(3)}}$. The probabilities are normalised by the total number of coincidence counts and the acquisition time for each measurement setting is two hours. The errors are deduced from Poissonian counting statistics. As shown in Fig.~\ref{fig:xyzbases} (a), the evident $20$ terms on the Pauli $Z$ bases are consistent with those expected for $\ket{D_{6}^{(3)}}$. However, there are also coincidence counts in $HHHHVV$, $HHVVVV$ and permutations thereof. This kind of noise originates from higher orders of the spontaneous parametric down-conversion (SPDC) process, in particular, from the eight-photon emission, where two of eight photons get lost due to the finite experimental detection efficiency. The remaining six photons will be registered as six-fold detector clicks for the noisy part as follows \cite{Dickestate2014}:
        \begin{equation}
            \varrho_{\rm noise}=\frac{4}{7} \varrho_{D_{6}^{(3)}}+\frac{3}{14}\left[\varrho_{D_{6}^{(2)}}+\varrho_{D_{6}^{(4)}}\right]
        \end{equation}
        where $\varrho_{D_{6}^{(j)}}=|D_{6}^{(j)}\rangle\langle D_{6}^{(j)}|$ with $|D_{6}^{(2)}\rangle=\nicefrac{1}{\sqrt{15}} \sum_{i} \mathcal{P}_{i}(|H H H H V V\rangle)$ and $|D_{6}^{(4)}\rangle=\nicefrac{1}{\sqrt{15}} \sum_{i} \mathcal{P}_{i}(|H H V V V V\rangle)$. Hence, the whole experimental quantum system can be  specified by the model $\varrho_{\rm{exp}}=p \varrho_{D_{6}^{(3)}}+(\nicefrac{1-p}{2})\varrho_{\rm noise}$. In the experiment, the parameter $p$ is determined by the power of the pump laser.
        
        The results of measurements on the Pauli $X$ and $Y$ bases are shown in Fig.~\ref{fig:xyzbases}~(b) and (c), respectively. The state  $\ket{D_{6}^{(3)}}$ can be transformed in these bases as follows \cite{Dickestate2009-weinfurter}:
        \begin{equation}
        \ket{D_{6}^{(3)}}=\sqrt{\frac{5}{8}}|\mathrm{GHZ}_{6}^{\mp}\rangle+ \sqrt{\frac{3}{16}}(|D_{6}^{(4)}\rangle \mp|D_{6}^{(2)}\rangle),
        \end{equation}    
        where $\quad\left|\mathrm{GHZ}_{N}^{\mp}\right\rangle=\nicefrac{1}{\sqrt{2}} \left(|0\rangle^{\otimes N} \mp|1\rangle^{\otimes N}\right)$, 0 denotes $\{D, L\}$, 1 denotes $\{A, R\}$. From Fig.~\ref{fig:xyzbases} (b) and (c), we observe the GHZ contribution as pronounced probabilities for the leftmost bars, $|DDDDDD\rangle$ or $|RRRRRR\rangle$, and rightmost bars, $|AAAAAA\rangle$ or $|LLLLLL\rangle$. This is directly related to the symmetry of 6-qubit Dicke state with three excitations $\ket{D_{6}^{(3)}}$, whereas the GHZ state manifests its two terms only in a single basis.
        
        \begin{figure}[h]
            \centering
            \includegraphics[width=0.7\linewidth]{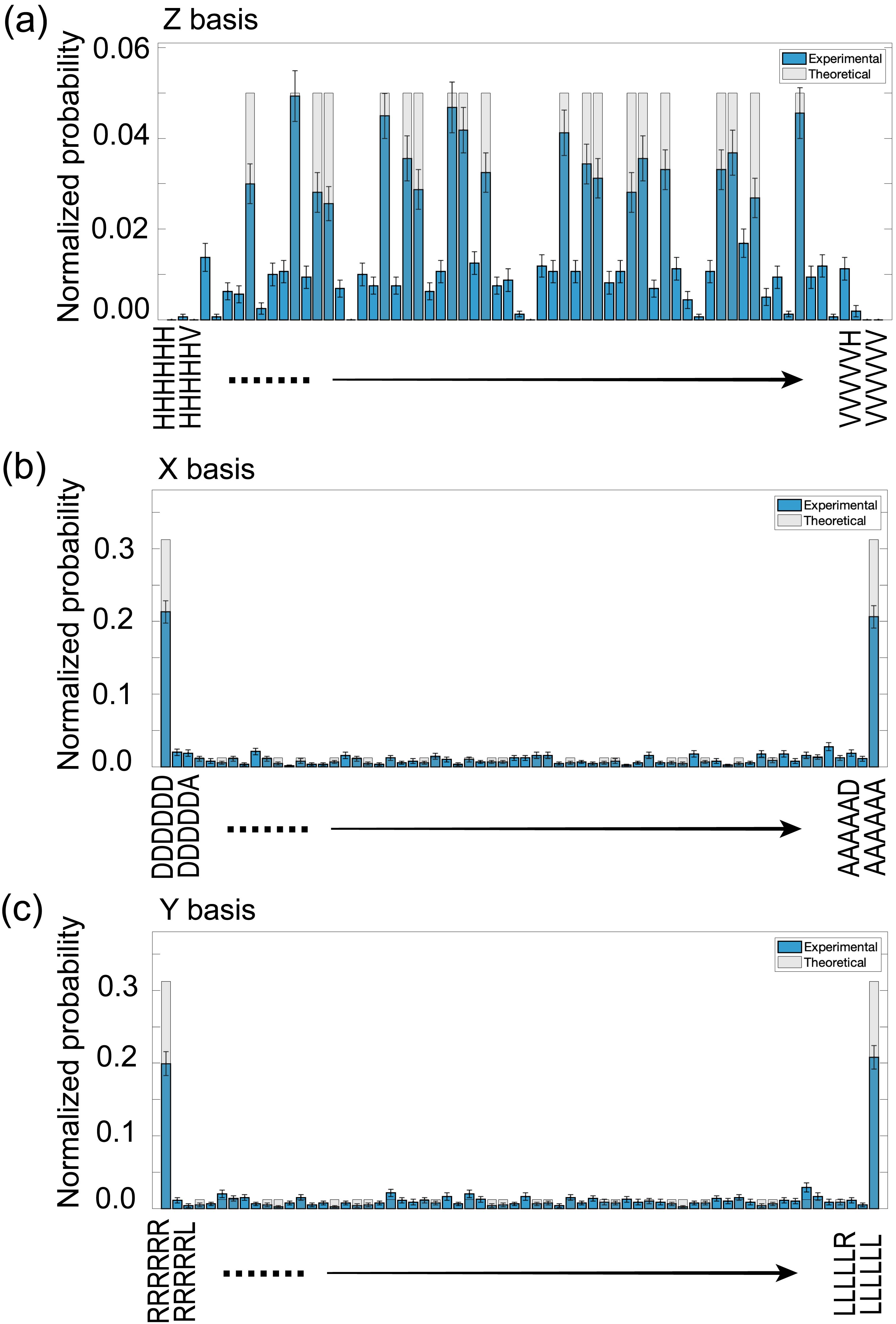}
            \caption{Experimentally measured normalised probability for the bases $Z$, $X$ and $Y$. The experimental results are denoted by the blue bars, which are normalised by the total number of coincidence counts. Theoretical predictions are shown as pale grey bars. The acquisition time for each measurement setting is two hours. The errors are deduced from Poissonian counting statistics.}
            \label{fig:xyzbases}
        \end{figure}

        \subsection{Reconstruction of the density matrices of all two-qubit subsystems} \label{app-sec:expe_reconstruct}
        In this part, we explain the method to reconstruct the density matrices of the two-qubit subsystems and provide the results of the experimental reconstructed density matrices. All the experimental density matrices are reconstructed by using maximum likelihood estimation (MLE) \cite{MLE2001}.
        
        A physical density matrix is Hermitian positive semidefinite, which is not guaranteed if the estimator is linear \cite{altepeter2005photonic}. This is due to the inherent statistical fluctuation in the number of counts in the experiment. 
        The idea of MLE is to find a physical density matrix $\varrho$ that is closely aligned with the observed experimental data. The set of tomographically complete observables is denoted by $\{\meas_\alpha\}_\alpha$. Each $\meas_\alpha$ can be decomposed into its measurement effects $\{\Pi_\alpha^o\}_o$, with $\{o\}$ labelling the possible measurement outcomes of $\meas_\alpha$. When a state $\varrho$ is measured according to $\mathcal{M}_\alpha$, the outcome $o$ occurs with probability $\tilde{p}_{o \mid \alpha}= \tr\left(\Pi^o_\alpha \varrho\right)$ and the number of counts is expected to $\tilde{N}_\alpha^o$. In the experiment, the measurement results consist of a set of counts $\{N_\alpha^o\}$ recorded for the $\alpha$th measurement setting and $o$th detector outcome combination. Assuming that the coincidence measurements has a Gaussian probability statistics, the probability $p(N_\alpha^o; \varrho)$ of obtaining the counts $N_\alpha^o$ is proportional to
        \begin{equation}
            p(N_\alpha^o; \varrho) \propto \exp[-\frac{(N_\alpha^o - \tilde{N}_\alpha^o)^2}{2 (\sigma_\alpha^o)^2}].       
        \end{equation}
        Thus, the likelihood that the matrix $\varrho$ would produce the measurement results $\{N_\alpha^o\}$ is        
        \begin{equation}
            p(\{N_\alpha^o\}; \varrho) = \frac{1}{\mathcal{N}} \prod_{\alpha,o} \exp[-\frac{(N_\alpha^o - N_\alpha\operatorname{tr}[\varrho \Pi_\alpha^o])^2}{2 (\sigma_\alpha^o)^2}],
            \label{eq:likelihood}
        \end{equation}
        where $\mathcal{N}$ is a normalisation constant, the standard derivation $\sigma_\alpha^o$ can be approximated as $\sigma_\alpha^o=\sqrt{\tilde{N}_\alpha^o} = \sqrt{{N_\alpha}\operatorname{tr}[\varrho \Pi_\alpha^o]}$, and $N_\alpha=\sum_{o}N_\alpha^o$ is the total counts for the observable $\meas_\alpha$.
        
        Then we want to find the state that maximises the likelihood of obtaining the counts $\{ N_\alpha^o \}$,
        \begin{equation}
            \varrho_{\text{MLE}} = \argmax_{\varrho \in S} p(\{N_\alpha^o\}; \varrho),
            \label{eq:mle}
        \end{equation}
        where $S$ denotes the set of physical density matrices. Considering a two-qubit quantum state $\varrho$, Eq.~\eqref{eq:mle} can be converted to an unconstrained optimisation problem by parameterising the state $\varrho$ as
        \begin{equation}
            \varrho(\vec{t}) = \frac{T(\vec{t})^\dag T(\vec{t})}{\operatorname{tr}[T(\vec{t})^\dag T(\vec{t})]},\label{eq:mle-unc}
        \end{equation}
        where
        \begin{equation}
            T(\vec{t}) = \begin{pmatrix}
                t_1 & 0 & 0 & 0 \\
                t_5 + \mathrm{i}t_6 & t_2 & 0 & 0\\
                t_7 + \mathrm{i}t_8 & t_9 + \mathrm{i}t_{10} & t_3 & 0\\
                t_{11} + \mathrm{i}t_{12} & t_{13} + \mathrm{i}t_{14} & t_{15} + \mathrm{i} t_{16} & t_4
            \end{pmatrix}
            \label{eq:t-param}
        \end{equation}
        and $\vec{t}=(t_1,t_2,\cdots,t_{16})$. In this way, while there are no constraints on $\vec{t}$ in Eq.~\eqref{eq:mle-unc}, the state $\varrho(\vec{t})$ is guaranteed to be physical \cite{MLE2001}. Then, the optimisation problem reduces to minimise the following cost function
        \begin{equation}
            \mathcal{L}(\vec{t})=\frac{1}{2} \sum_{\alpha,o} \frac{\left(N_\alpha^o-N_\alpha\operatorname{tr}\left[{\varrho}(\vec{t}){\Pi}_\alpha^o \right]\right)^{2}}{N_\alpha \operatorname{tr}\left[{\varrho}(\vec{t}){\Pi}_\alpha^o \right]}.
            \label{eq:likelihood2}
        \end{equation}

        Taking the two-qubit subsystem consisting of the first and the second qubits as an example, the projective measurement operators take the form
        \begin{equation}
            \Pi_{\alpha}^{(o_1,o_2)} = |\psi_\alpha^{o_1} \rangle\langle \psi_\alpha^{o_1}| \otimes |\varphi_\alpha^{o_2} \rangle\langle \varphi_\alpha^{o_2}| \otimes \mathbb{I} \otimes \mathbb{I} \otimes \mathbb{I} \otimes \mathbb{I}, \quad \alpha=1,\cdots,m \label{eq:meas-op}
        \end{equation}
        where $o_1,o_2=\{+,-\}$ denotes the possible outcomes on the two qubits and $|\psi_\alpha^-\rangle\langle \psi_\alpha^-| = \mathbb{I} - |\psi_\alpha^+\rangle\langle \psi_\alpha^+|$, $|\varphi_\alpha^-\rangle\langle \varphi_\alpha^-| = \mathbb{I} - |\varphi_\alpha^+\rangle\langle \varphi_\alpha^+|$ are the orthogonal projections, and $n=12,9$ respectively for the Pauli and non-Pauli overlapping tomography scheme. For each $\alpha$-th measurement setting, there are four measurement operators $|\psi_\alpha^\pm\rangle\langle \psi_\alpha^\pm|, |\varphi_\alpha^\pm\rangle\langle \varphi_\alpha^\pm|$, thus in total $4n$ two-qubit projective measurement operators. 
        
        The experimental number of counts for the projective measurement operator $\Pi_{\alpha}^{(o_1,o_2)}$ is calculated by
        \begin{equation}
            N_{\alpha}^{{(o_1,o_2)}} = \sum_{o_3,o_4,o_5,o_6=\pm} N_{\alpha}^{(o_1,o_2,o_3,o_4,o_5,o_6)}, \label{eq:raw-data}
        \end{equation}
        where $N_{\alpha}^{(o_1,o_2,o_3,o_4,o_5,o_6)}$ is the $2^6=64$ raw data obtained from six-fold coincidence measurements on the $\alpha$-th measurement setting with the $(o_1,o_2,o_3,o_4,o_5,o_6)$th detector combination clicked. Then, the normalised joint measurement probability distribution of the two-qubit subsystem can be calculated by
        \begin{equation}
            p_{\alpha}^{(o_1,o_2)}=\frac{N_{\alpha}^{(o_1,o_2)}}{\sum_{o_1,o_2=\pm} N_{\alpha}^{(o_1,o_2)}}.
            \label{eq:probability_distribution}
        \end{equation}    
        Substituting Eq.~\eqref{eq:meas-op} and Eq.~\eqref{eq:probability_distribution} into Eq.~\eqref{eq:likelihood2} yields the likelihood function that needs to be minimised to find the physical density matrix of the two-qubit subsystem,
        \begin{equation}
            \mathcal{L}=\frac{1}{2} \sum_{\alpha=1}^{m} \sum_{o_1,o_2=\pm}\frac{\left(p_{\alpha}^{(o_1,o_2)}-\operatorname{tr}\left[{\varrho}(\vec{t}){\Pi}_{\alpha}^{(o_1,o_2)}\right]\right)^{2}}{ \operatorname{tr}\left[{\varrho}(\vec{t}){\Pi}_{\alpha}^{(o_1,o_2)} \right]/[\sum_{o_1,o_2=\pm} N_{\alpha}^{(o_1,o_2)}]}.
            \label{eq:likelihood3}
        \end{equation}

        Finally, we present the experimental reconstructed density matrices for all two-qubit subsystems with the optimal overlapping tomography of Pauli measurements [Fig.~\ref{fig:densitymatrices}~(a)] and non-Pauli measurements [Fig.~\ref{fig:densitymatrices}~(b)]. 
        With the acquisition time 2 hours for each measurement setting, we collected total of 9437 (8088) counts for the minimal Pauli (non-Pauli) case. 
        The mutually overlaps of the two-qubit subsystems reconstructed by the Pauli scheme and the corresponding ones reconstructed by the non-Pauli scheme is shown from the average mixed state fidelity 0.963. {The prolonged and continuous measurement time for the entire experiment unavoidably introduced a slight drift in the system. In Fig.~4 of the main text, the error bars for the fidelities represent a $\pm1 \sigma$ uncertainty, corresponding to a $68\%$ confidence interval. However, if a $\pm2 \sigma$ uncertainty is considered, corresponding to $95\%$ confidence interval, there would be no regions outside the error bounds across the different measurement schemes.}
        \begin{figure}[h]
            \centering
            \includegraphics[height=0.9\textheight]{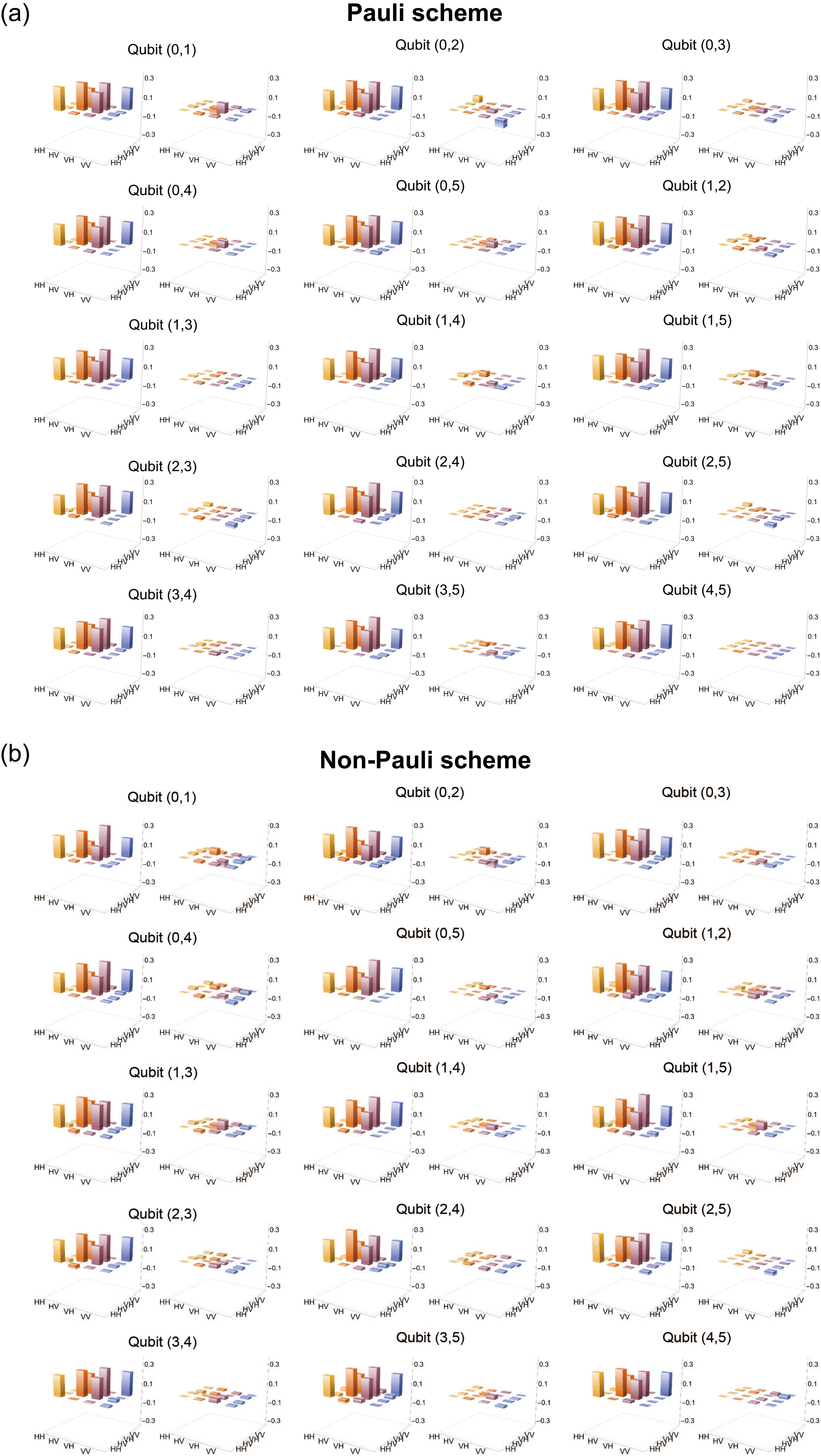}
            \caption{Experimental density matrices for all 15 two-body marginals reconstructed by the optimal overlapping tomography with (a) Pauli measurements and (b) non-Pauli measurements. For each pair of qubits, the left bars represent the real part while the right bars represent the imaginary part.}
            \label{fig:densitymatrices}
        \end{figure}

    \subsection{Discussion on the different error bars of experimental fidelities between Pauli and non-Pauli scheme} \label{app-sec:error_bars}
    Based on the experimental measurement data and the Monte Carlo simulations of 100 Poisson distribution samples, we obtained the average experimental fidelity for the fifteen two-body marginals $0.9091\pm0.0133$ for Pauli measurements and $0.8962\pm0.0180$ for non-Pauli measurements, both with the same acquisition time (2 hours) per measurement setting. 
    The difference in the average error bar is expected due to the different number of total counts in each experiment, but also due to the larger variances associated to the non-Pauli scheme. 
    Indeed, for non-Pauli scheme, a Monte Carlo simulation considering the same number of samples as the Pauli scheme ($9437$) leads to an average error bar of $0.0166$ ($1\sigma$ confidence level and $100$ trials) for the non-Pauli scheme, thus remains larger than that of the Pauli scheme.
    
    As estimated in Appendix~\ref{app-sec:nbr_samples}, requiring the minimal number of 9 measurement settings for two-body overlapping tomography of six qubits incurs a trade-off in the form of approximately $40\%$ more measurement samples to achieve the same confidence level as the optimal Pauli scheme, which require $12$ measurement settings. Considering a Monte Carlo simulation with 1.4 times total number of samples ($13212$) compared to the Pauli scheme, the average error bar for non-Pauli scheme is estimated to be $0.0134$ ($1\sigma$ confidence level and $100$ trials). 
    This value closely approaches the average error bar ($0.0133$) for Pauli scheme, which further validated our theoretical estimation of the difference of the sample cost for the two schemes.

\end{appendix}

\end{document}